\def\MatrixFont{\bf}
\def\VectorFont{\bf}
\newcommand{\mH}{{\MatrixFont H}}
\newcommand{\mI}{{\MatrixFont I}}
\newcommand{\mK}{{\MatrixFont K}}
\newcommand{\mP}{{\MatrixFont P}}
\newcommand{\mQ}{{\MatrixFont Q}}
\newcommand{\mR}{{\MatrixFont R}}
\newcommand{\mW}{{\MatrixFont W}}
\newcommand{\mX}{{\MatrixFont X}}
\newcommand{\mY}{{\MatrixFont Y}}
\newcommand{\mZ}{{\MatrixFont Z}}
\newcommand{\vm}{{\VectorFont m}}
\newcommand{\vp}{{\VectorFont p}}
\newcommand{\vv}{{\VectorFont v}}
\newcommand{\vw}{{\VectorFont w}}
\newcommand{\vx}{{\VectorFont x}}
\newcommand{\vz}{{\VectorFont z}}
\newtheorem{proposition}{Proposition}
\newtheorem{lemma}{Lemma}
\newtheorem{remark}{Remark}
\title{The Effect of Population Size for Pathogen Transmission on Prediction of COVID-19 Pandemic Spread}
\begin{document}
\author{Xuqi Zhang$^{\text{1}}$, Haiqi Liu$^{\text{1}}$, Hanning Tang$^{\text{1}}$, Mei Zhang$^{\text{1}}$, Xuedong Yuan$^{\text{2}}$, Xiaojing Shen$^{\text{1}*}$\\
\textsuperscript{1}School of Mathematics, Sichuan University, Chengdu, Sichuan, China\\
\textsuperscript{2}School of Computer Science, Sichuan University, Chengdu, Sichuan, China}
\renewcommand{\thefootnote}{\fnsymbol{footnote}}
\footnotetext[1]{Corresponding author (shenxj@scu.edu.cn).}
\footnotetext[2]{This work was supported in part by the NSFC No. 61673282.}
\date{}
 \maketitle
\begin{abstract}
Extreme public health interventions play a critical role in mitigating the local and global prevalence
and pandemic potential of COVID-19. Here, we use population size for pathogen transmission to measure the intensity of public health interventions, which is a key characteristic variable for nowcasting and forecasting of the epidemic. By formulating a hidden Markov dynamic system and using nonlinear filtering theory, we have developed a stochastic epidemic dynamic model under public health interventions. The model parameters and states are estimated in time from internationally available public data by combining an unscented filter and an interacting multiple model filter. Moreover, we consider the computability of the population size and provide its selection criterion. We estimate the mean of the basic reproductive number of China and the rest of the globe except China (GEC) to be 2.46 (95\% CI: 2.41-2.51) and 3.64 (95\% CI: (3.55-3.72), respectively. We infer that the number of latent infections of GEC is about $7.47\times10^5$ (95\% CI: $7.32\times10^5$-$7.62\times10^5$) as of April 2, 2020. We predict that the peak of infections in hospitals of GEC may reach $3.00\times10^6$ on the present trajectory, i.e., if the population size for pathogen transmission and epidemic parameters remains unchanged. If the control intensity is strengthened, e.g., 50\% reduction or 75\% reduction of the population size for pathogen transmission, the peak would decline to $1.84\times10^6$, $1.27\times10^6$, respectively.
\end{abstract}

\noindent{\bf keywords:} 
Coronavirus disease 2019 (COVID-19), stochastic SEIR model, nowcasting, forecasting.

\section{Introduction}
In December 2019, a new type of coronavirus pneumonia named COVID-19 spread very rapidly in the city of Wuhan, China \cite{Ncov2019}. It swept the globe in late January to April, 2020. This outbreak of COVID-19 is now present in at least 100 countries and the virus has infected over $9.64\times10^5$ people with more than $4.95\times10^4$ deaths as of April 2, 2020. This new coronavirus has been considered to be a true global health emergency. The World Health Organization (WHO) announced the COVID-19 to be a global pandemic on March 11, 2020 local time in Geneva. Faced with the severe epidemic trend of COVID-19, many countries have declared a state of emergency and have instituted drastic actions and interventions to stop the spread of the virus. For example, China implemented unprecedented intervention strategies on January 23, 2020  to prevent further outbreak of COVID-19. These policies have included large-scale quarantine, strict controls on travel and extensive monitoring of suspected cases which have made a huge difference in controlling the spread of the virus. Nevertheless, the outbreak is still continuing and having a devastating impact  in Europe, USA and many other countries.  Many scientific research groups in the world including those in  China, Europe, and the United States are working independently  or together by sharing their results and  experiences, including analyzing the mechanism of disease transmission \cite{wu2020new,zhou2020pneumonia,zhong} and identifying optimal control measures \cite{SEIR_cite3,JTD36385,Layneeabb1469} for countries where the outbreak is in its early stages. In addition, the prevention and control experience of the 2003 SARS epidemic \cite{ksiazek2003novel,hu2017discovery,chowell2003sars}, the worldwide 2009 H1N1 influenza pandemic \cite{fraser2009pandemic}, and the 2012 MERS-CoV \cite{haagmans2014middle}  are also being used for COVID-19. 

Mathematical modeling and prediction tools are extremely important tools for decision-making by policy makers for epidemic control. The classical SIR, SEIR and GLEAM models are extensively used \cite{SIR,sciencepop,SEIR,SEIR_cite1,SEIR_cite2,chinazzi2020effect} for comprehending the mechanism of disease transmission, its spread and forecasting. These approaches have become remarkably successful in free infection and transmission patterns and predicting the temporal evolution of ongoing epidemics. However, under strict control measures,  the abundance of different, often
mutually incompatible, forecasting results, suggest that we still lack a fundamental understanding of the key factors of infection and transmission dynamics. One of the them is the population size under consideration for pathogen transmission. Here, we assume that there exists a closed subsystem in a city or country when the uninfected people are in home quarantine or some cities are locked down. The population size of the subsystem is considered as the population size for pathogen transmission, which is usually much smaller than the total population of a city or county when deploying intervention strategies. The total population is divided into the population for pathogen transmission and the isolated uninfected population (see Fig. \ref{population_divide}).
\begin{figure}[H] 
	\centering  
	\vspace{0.35cm} 
	\subfigtopskip=2pt 
	\includegraphics[scale=0.15]{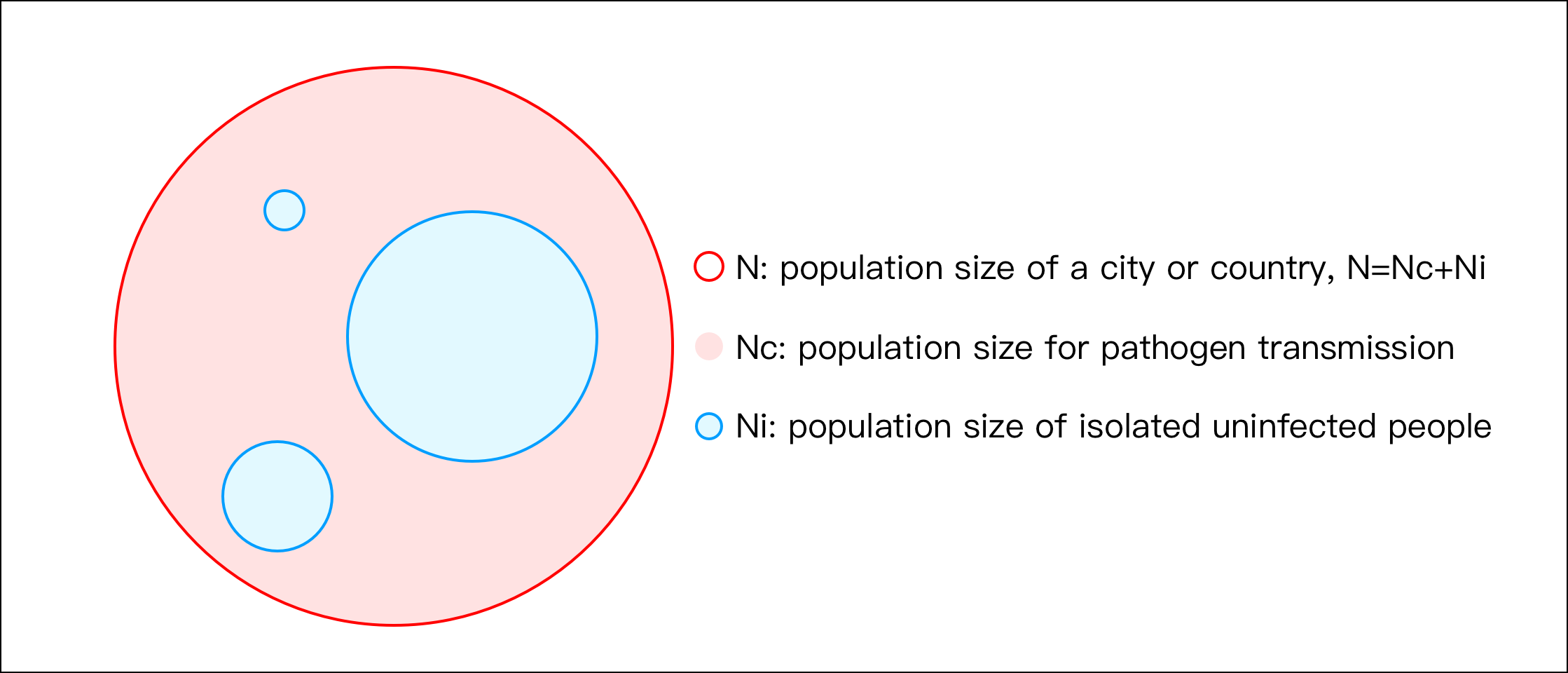}
	\caption{The total population and its division.}
	\label{population_divide}
\end{figure}

In this paper, we use the population size $N_c$ for pathogen transmission to measure the intensity of public health interventions, discuss the identifiability of the population size $N_c$ and propose an intuitive and efficient criterion to select the unknown population size. Furthermore, we concentrate on three specific aims compatible with  WHO's objectives \cite{Layneeabb1469}: (\romannumeral1) Modeling: develop mathematical differential equation dynamic models that account for the random variations in pathogen, society, and public health intervention variables etc.; (\romannumeral2) Nowcasting: estimate the states and parameters as a function of time through nonlinear filtering with publicly available international data of COVID-19. Moreover, we analyze the key factor, the number of latent infections, which determines the size of the newly confirmed infections in the next seven to fourteen days in terms of the incubation period \cite{zhong}, \cite{latent}; (\romannumeral3) Forecasting: predict local and global spread trends (e.g. scales, peaks and confidence intervals) of the infections under different control intensities, i.e., different population sizes for pathogen transmission. Our main contributions are summarized as follows:
\begin{itemize}
	\item We have developed a stochastic epidemic dynamic model under public health
	interventions. By formulating a hidden Markov dynamic system, the model parameters and states are estimated in time from publicly available international data by
	combining an unscented filter and an interacting multiple model filter. 
	
	\item Moreover, we consider the computability of the population size and provide its selection criterion. We use it to
	measure the intensity of public health interventions, which is a key characteristic variable for estimation and prediction of the epidemic. 
	
	\item We provide comprehensive nowcasting and forecasting results for the local and global infections. We estimate the mean of the basic reproductive number of China and GEC to be 2.46 (95\% CI: 2.41-2.51) and
	3.64 (95\% CI: (3.55-3.72), respectively. 
\end{itemize}

This paper is organized as follows: in Section 2, stochastic SEIR models under public health interventions are proposed. And then, the nowcasting and forecasting methods for this problem are introduced in Section 3. Data analysis of COVID-19 is given in Section 4, and the discussion of our results and conclusion are given in Section 5.
\section{Mathematical Epidemic Modeling}
In this section, we introduce a stochastic susceptible-exposed-infectious-recovered (SEIR) model for the free propagation stage of the epidemic, where random variables and processes are included in the epidemiological process model to characterize the uncertainty associated with the propagation of the epidemic. Moreover, we develop a stochastic SEIR model under public health interventions. 
\subsection{The Stochastic SEIR Model}
Based on the classical SEIR model proposed by Aron and Schwartz \cite{SEIR} that is deterministic in nature. In order to account for system uncertainties, we modify the SEIR model by introducing random noises and random parameters in the SEIR model as follows:
\begin{align}
\label{RSEIR1}\frac{\mathrm{d} S(t)}{\mathrm{d} t}&=-\alpha(t) \cdot \frac{S(t) \cdot I(t)}{N}+w_{S}(t),\\
\label{RSEIR2}\frac{\mathrm{d} E(t)}{\mathrm{d} t}&=\alpha(t) \cdot \frac{S(t) \cdot I(t)}{N}-\beta(t)\cdot E(t)+w_{E}(t),\\
\frac{\mathrm{d} I(t)}{\mathrm{d} t}&=\beta(t)\cdot E(t) -(\gamma^1(t)+\gamma^2(t)) \cdot I(t)+w_{I}(t),\\
\frac{\mathrm{d} R(t)}{\mathrm{d} t}&=\gamma^1(t) \cdot I(t)+w_{R}(t),\\
\label{RSEIR5}	\frac{\mathrm{d} D(t)}{\mathrm{d} t}&=\gamma^2(t) \cdot I(t)+w_{D}(t),\\
\label{parameter1}\frac{\mathrm{d}\vp(t)}{\mathrm{d}t}&=\vw_{\vp}(t),
\end{align}
where $S(t)$, $E(t)$, $I(t)$, $R(t)$ and $D(t)$  are the number of the susceptible, exposed, infectious, recovered and disease-caused death cases at time $t$, respectively. We denote the state vector as 
\begin{align*}
\vx(t)=\left(S(t), E(t), I(t), R(t), D(t),\vp(t)^T\right)^T,
\end{align*} 
where the parameter  vector $\vp(t)=\left(\alpha(t), \beta(t), \gamma^1(t), \gamma^2(t)\right)^T$ includes the mean contact rate, transfer rate from exposed to infective, recovery rate and disease-caused mortality. Since these model parameters may experience random changes in the spreading process due to different climate changes, population densities, public health interventions, and medical cares in different regions, we model the uncertainty associated with the parameters as Brownian motions, which could be explained by a random walk with a small noise due to the fact that the parameters represent average random characteristics of a large number of members of the total population \cite{brown}, and $\vw_{\vp}(t)=\left(w_{\alpha}(t), w_{\beta}(t),w_{\gamma^1}(t),w_{\gamma^2}(t)\right)^T$ is a white noise process with  spectral density $\mQ_{\vp}(t)$. Moreover, we define the basic reproductive number $R_0= \alpha/(\gamma^1+\gamma^2)$, which means that the mean number of infections caused by an infected individual in a susceptible population \cite{R0}. Besides, the birth rate, natural death rate and other uncertainties are considered as a zero mean white noise process $\vw(t)=\left(w_{S}(t), w_{E}(t), w_{I}(t),w_{R}(t),w_{D}(t)\right)^T$ with spectral density $\mQ_{\vw}(t)$. In a short epidemic period, the total population size remains constant, and
\begin{align}
\label{cons01}S(t) + E(t) + I(t) +R(t) +D(t)= N.
\end{align}
The stochastic model (\ref{RSEIR1})-(\ref{parameter1}) is suitable for the free propagation of epidemic without control measures. It also characterizes the uncertainties of parameters and states in the propagation process.
\subsection{The Stochastic SEIR Model Under Public Health Interventions}
When there are public health interventions and the infections are isolated, we extend the stochastic SEIR model as follows:
\begin{align}
\label{stochastic SEIRC1}\frac{\mathrm{d} S_c(t)}{\mathrm{d} t}&=-\alpha_c(t) \cdot \frac{S_c(t) \cdot E_c(t)}{N_c}+w_{S_c}(t),\\
\label{stochastic SEIRC2}\frac{\mathrm{d} E_c(t)}{\mathrm{d} t}&=\alpha_c(t) \cdot \frac{S_c(t) \cdot E_c(t)}{N_c}-\beta_c(t)\cdot E_c(t) +w_{E_c}(t),\\
\label{stochastic SEIRC3}\frac{\mathrm{d} I_c(t)}{\mathrm{d} t}&=\beta_c(t)\cdot E_c(t)-(\gamma_c^1(t)+\gamma_c^2(t)) \cdot I_c(t)+w_{I_c}(t),\\
\label{stochastic SEIRC4}\frac{\mathrm{d} R_c(t)}{\mathrm{d} t}&=\gamma_c^1(t) \cdot I_c(t)+w_{R_c}(t),\\
\label{stochastic SEIRC5}\frac{\mathrm{d} D_c(t)}{\mathrm{d} t}&=\gamma_c^2(t) \cdot I_c(t)+w_{D_c}(t),\\
\label{stochastic SEIRC6}\frac{\mathrm{d}\vp_c(t)}{\mathrm{d}t}&=\vw_{\vp_c}(t),
\end{align}
where $S_c$ and $N_c$ are the susceptible and the population size under public health interventions, respectively. $E_c(t)$ and $I_c(t)$ are redefined as the number of latent infections, and confirmed infections at time $t$, respectively. The reason is that the confirmed infections can be observed each day and it is convenient to estimate the model parameters and forecast the number of confirmed and latent infections by nonlinear filtering methods. 
Moreover, the infectious $I(t)$ in (\ref{RSEIR1}) are replaced by $E_c(t)$ in (\ref{stochastic SEIRC1}), which means that the susceptible are not infected by isolated infections. The state vector and the parameters $\vx_c(t), \vp_c(t), \vw_{\vp_c}(t)$ and $\vw_c(t)=\left(w_{S_c}(t), w_{E_c}(t), w_{I_c}(t),w_{R_c}(t),w_{D_c}(t)\right)^T$  are similarly defined as that in the stochastic model (\ref{RSEIR1})-(\ref{parameter1}). 

It is worth noting that under public health interventions, we assume that there exists a closed subsystem in a city
or country when the uninfected people are in home quarantine or some cities are locked down. The
population size of the subsystem is considered as the population size for pathogen transmission $N_c$, which is usually much smaller than the urban or national population when deploying intervention strategies. The total population is divided into the population for pathogen transmission and the isolated uninfected population (see Fig. \ref{population_divide}). In fact, the population size $N_c$ plays a key role in nowcasting and forecasting epidemic trends and measures of control intensity. This paper aims to analyze the data of COVID-19 epidemic when the population size $N_c$ is unknown. The subscript $c$ of $N_c, S_c , E_c, I_c, R_c, D_c, \vp_{c}, \vx_c, \vw_c$ and $\vw_{\vp_c}$ is omitted for simplicity when there is no confusion.
\section{Nowcasting and Forecasting Methods}
In this section, we provide the observation model for the stochastic SEIR dynamic system (\ref{stochastic SEIRC1})-(\ref{stochastic SEIRC6}) as well as methods for nowcasting, forecasting and the selection criterion of the population size $N$.
\subsection{Continuous-discrete Dynamic System}
On the basis of the state $\vx(t)$ defined in the model (\ref{stochastic SEIRC1})-(\ref{stochastic SEIRC6}) and the prevalence data that can be obtained, we model the observation equation as follows:
\begin{align}
\label{meas}\mathbf{z}_{k}\triangleq&\left(M_k^c,R_k^c,D_k^c\right)^T= \mH \vx(t_k)+\vv_{k},
\end{align}
where $\vz_k$ is, the measurement of the system state $\vx(t_k)$, which includes the number of the cumulative confirmed infections $M_k^c$, cumulative recovered cases $R^c_k$ and cumulative disease-caused death cases $D^c_k$ at time $t_k$. The statistical error associated with the measurements is modeled as a Gaussian white measurement noise $\vv_{k}$ with a small covariance $\mR_{k}$, and $\mH$ is a linear measurement matrix defined as follows:
\begin{align*}
\mH&=\left(      
\begin{array}{ccccccccc}  
0 & 0 & 1 & 1 & 1 & 0 & 0 & 0 & 0\\
0 & 0 & 0 & 1 & 0 & 0 & 0 & 0 & 0\\
0 & 0 & 0 & 0 & 1 & 0 & 0 & 0 & 0\\
\end{array}
\right) .
\end{align*}
As a consequence, we have the following continuous-discrete hidden Markov dynamic system:
\begin{align}
\label{cd1}\frac{\mathrm{d}\vx(t)}{\mathrm{d}t}&=f(\vx(t),\vw_{\vx}(t)),\\
\label{cd2}\vz_{k}&=\mH\vx(t_k)+\vv_{k},
\end{align}
where $f(\cdot)$ is a simplified expression of the drift function corresponding to the system (\ref{stochastic SEIRC1})-(\ref{stochastic SEIRC6}), and the process noise $\vw_{\vx}(t)=\left(\vw(t)^T, \vw_{\vp}(t)^T\right)^T$ has a covariance $\mQ(t)$ as follows:
\begin{align*}
\mQ(t)=\left(\begin{array}{cc}
\mQ_{\vw}(t)&\\
&\mQ_{\vp}(t)
\end{array}\right),
\end{align*}
where the process noise $\vw_{\vx}(t)$, measurement noise $\vv_k$, and initial state are assumed to be mutually independent.
\subsection{Nowcasting and Forecasting}\label{ss}
For the nonlinear system (\ref{cd1})-(\ref{cd2}), the optimal Bayesian state estimation is usually intractable. There are several strategies to approximate the optimal
estimation, such as the unscented Kalman filter (UKF) \cite{UKF}, \cite{cd-UKF}, and particle filter \cite{particle}. For the system under consideration, we employ the continuous-discrete UKF \cite{UKF}, \cite{cd-UKF} to estimate the random state recursively by the daily data $\vz_k$. Specific steps of the continuous-discrete UKF are shown in Appendix 7.1, where the continuous state model is implemented by the fourth order Runge-Kutta method \cite{RK}. 

\begin{remark}
	Since there are sudden changes in the states and parameters, e.g., the scheme to designate the confirmed infections in China was revised on February 12, 2020 so that the number of cumulative confirmed infections had a drastic increase of 15152 persons \cite{NHC1}. The transfer rate $\beta(t)$ should be modeled differently before and after February 12, 2020. 　Moreover, the classical interacting multiple model (IMM) filter in maneuvering target tracking \cite{mazor1998interacting} can be used to enhance the stability of the algorithm and improve the performance  of the nowcasting and forecasting methods.
\end{remark}

Furthermore, the prediction of the state and covariance are derived by the model prediction step of UKF through the model (\ref{stochastic SEIRC1})-(\ref{stochastic SEIRC6}) and smooth parameters. Let $\vm_{i+j|i}=\left(I_{i+j|i}, R_{i+j|i}, D_{i+j|i}\right)^T$ denote the vector consisting of the prediction of the confirmed infections, recovered cases and disease-caused death cases from time $t_i$ to $t_{i+j}$, where $M^c_{i+j|i}=I_{i+j|i}+R_{i+j|i}+D_{i+j|i}$ represents the prediction of the cumulative confirmed infections from time $t_i$ to $t_{i+j}$, and $\sigma_{i+j|i}$ is the standard deviation of the error of $M^c_{i+j|i}$. Then, the approximate 95\% confidence intervals (CIs) are calculated as follows:
\begin{align}
\left(M^c_{i+j|i}-2\sigma_{i+j|i},\ M^c_{i+j|i}+2\sigma_{i+j|i}\right).
\end{align}

Based on the proposed stochastic SEIR model under public health interventions (\ref{stochastic SEIRC1})-(\ref{stochastic SEIRC6}), the nowcasting and forecasting methods described above, the local and global spread trends of the epidemic are analyzed under public health interventions and different control intensities, respectively.
\subsection{Selection Criterion of the Population Size $N$}
When the epidemic spreads freely in a city, the urban population can be chosen as the population size $N$. However, under public health interventions, the value of $N$ should be selected carefully, and we use $N$ to measure the intensity of interventions. Note that the $E, I, R, D$ are integer valued in practical applications, based on the recursive method, Propositions \ref{pro0}-\ref{pro} (see Appendix 7.3-7.4) show that $N$ cannot be estimated since the state estimates are the same for different $N$ when they are large. Thus, it is unnecessary to consider a very large $N$, especially under strict control measures or early on in an outbreak. On the other hand, it is not difficult to show that $N$ can be estimated when it is not too large since $\frac{E\left(E+I+R+D\right)}{N}$ is not infinitesimal (see the analysis in Appendix 7.2). Moreover, we provide a selection criterion for the value of $N$ by solving the following optimization problem, i.e., we select the population size by minimizing the average relative prediction error of the cumulative confirmed infections in the next $J$ time steps:
\begin{align}\label{criterion2}
\begin{split}
\mathop{\min}_{N}\ &\frac{1}{J}\sum_{j=1}^{J}  e_j(N)
\\
\text{s.t.}\ &N\leq N_u,
\end{split}
\end{align}
where $J$ is a constant, e.g., 7 days. $N$ is the variable over which optimization is to be carried out, $N_u$ may be half of the regional population or other values based on prior data, which depends on the intensity of the control measures etc., and the $e_j(N)$ is calculated as follows:
\begin{align}
e_j(N)=\frac{1}{k-j}\sum_{i=1}^{k-j}\left|\frac{M^c_{i+j|i}(N)-M^c_{i+j}}{M^c_{i+j}}\right|,
\end{align}
where $M^c_{i+j}$ and $M^c_{i+j|i}(N)$ are the observations of the cumulative confirmed infections and the prediction of the cumulative confirmed infections based on the population size $N$ through time $t_i$ to $t_{i+j}$ for $j=1, \ldots, J$, respectively.
\section{Data Analysis of COVID-19 Epidemic}
Based on the proposed stochastic SEIR model under public health interventions (\ref{stochastic SEIRC1})-(\ref{stochastic SEIRC6}) and the nowcasting and forecasting methods, the local and global spread trends of COVID-19 epidemic are analyzed under different control intensities, respectively. All the data are collected from daily reports of National Health Commission of the People’s Republic of China (NHC) and Tencent News real-time epidemic tracking \cite{NHC}, \cite{Tencent}, including the number of cumulative confirmed infections, recovered cases and disease-caused death cases. 

The initial prior ranges of the states and parameters for the stochastic SEIR model (\ref{stochastic SEIRC1})-(\ref{stochastic SEIRC6}) are as follows: 
\begin{itemize}
	\item $E, I, R, D$: the initial values of $I, R, D$ can be set as the number of reported confirmed cases, recovery cases and disease-caused death cases, respectively. The initial number of latent infections $E$ can be set as $c\cdot (I+R+D)$, where $5\leq c\leq20$.
	\item $\alpha$: the contact rate. $0.1\leq\alpha\leq0.3$.
	\item $\beta$:  the transfer rate from latent to confirmed. 7 days $\leq\frac{1}{\beta}\leq$ 14 days.
	\item $\gamma^1, \gamma^2$: the recovery rate and disease-caused death rate. $0.05\leq\gamma^1\leq0.1$, $0.001\leq\gamma^2\leq0.01$.
\end{itemize}

\subsection{The Selection of the Population Size $N$ and Estimation of the Basic Reproductive Number}

Fig. \ref{fig_hubei_trend2} shows the prediction of the confirmed infections $I(t)$ based on the data through January 20 to February 24 of Hubei Province, China under different intensities of  public health interventions, i.e, maintaining the control intensity and the population size $N=1\times10^6$, relaxing the control intensity and the population sizes $N=2\times10^6$, and $N=3\times10^6$, respectivley. The results indicates that relaxing the Hubei quarantine would lead to a second epidemic peak. It also indicate that the selection of the population size for pathogen transmission is a key factor for the forecasting of the epidemic and $N$ is identifiable for this case. 

Fig. \ref{fig_beijing_trend1} shows the prediction for Beijing City. It indicates that when $N$ is greater than $10^5$, the predicted values are the same, i.e, $N$ is unidentifiable. This is consistent with Propositions \ref{pro0}-\ref{pro}. In fact, early on in the epidemic, Beijing took very strict control measures so that the basic reproductive number is very small, the small number of infections cannot spread to a larger population.
\begin{figure}[H]
	\centering
    \vspace{0.35cm} 
	\subfigure[Hubei province, China]{
		\includegraphics[width=0.45\linewidth]{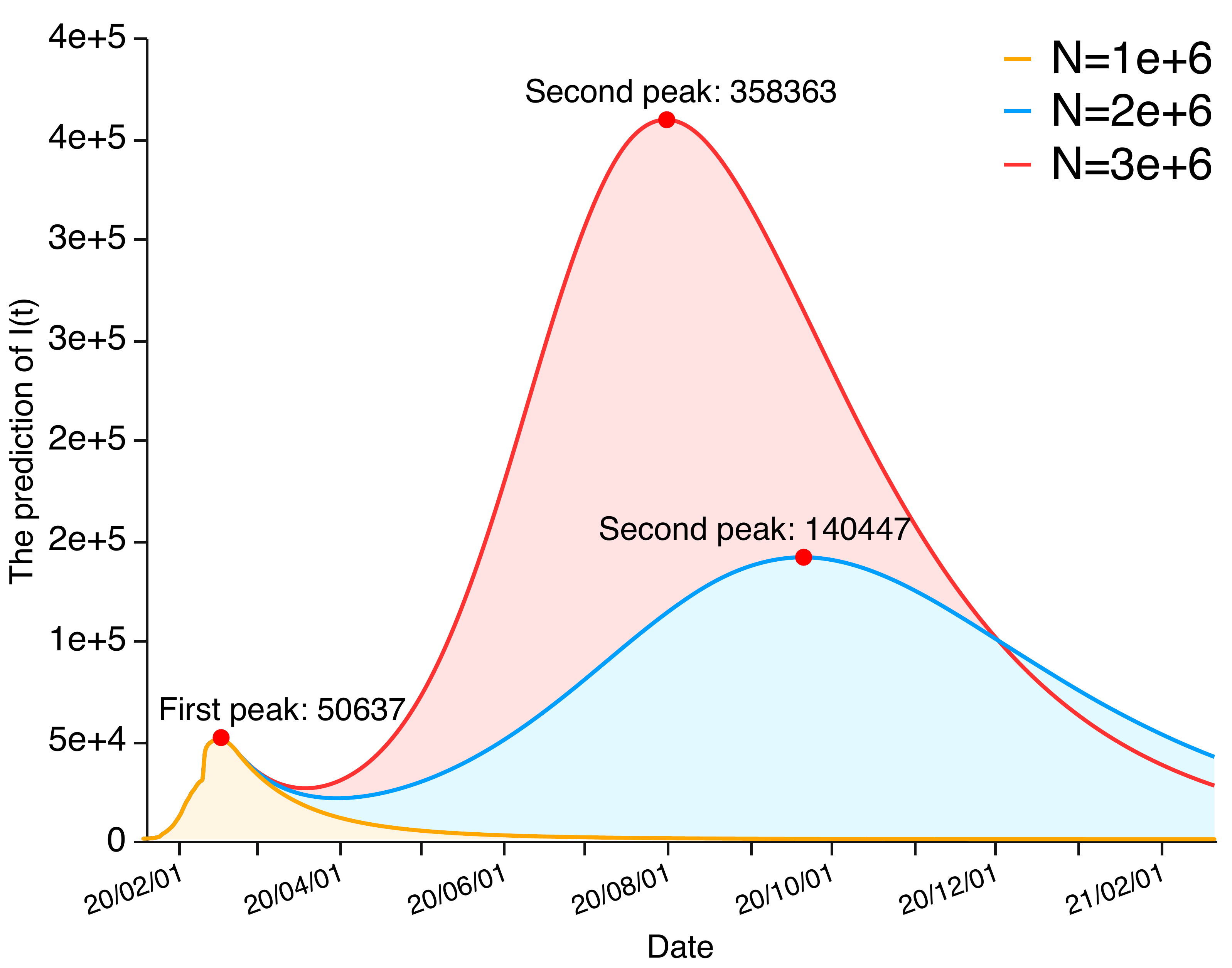}
		\label{fig_hubei_trend2}
	}
	\subfigure[Beijing City, China]{
		\includegraphics[width=0.45\linewidth]{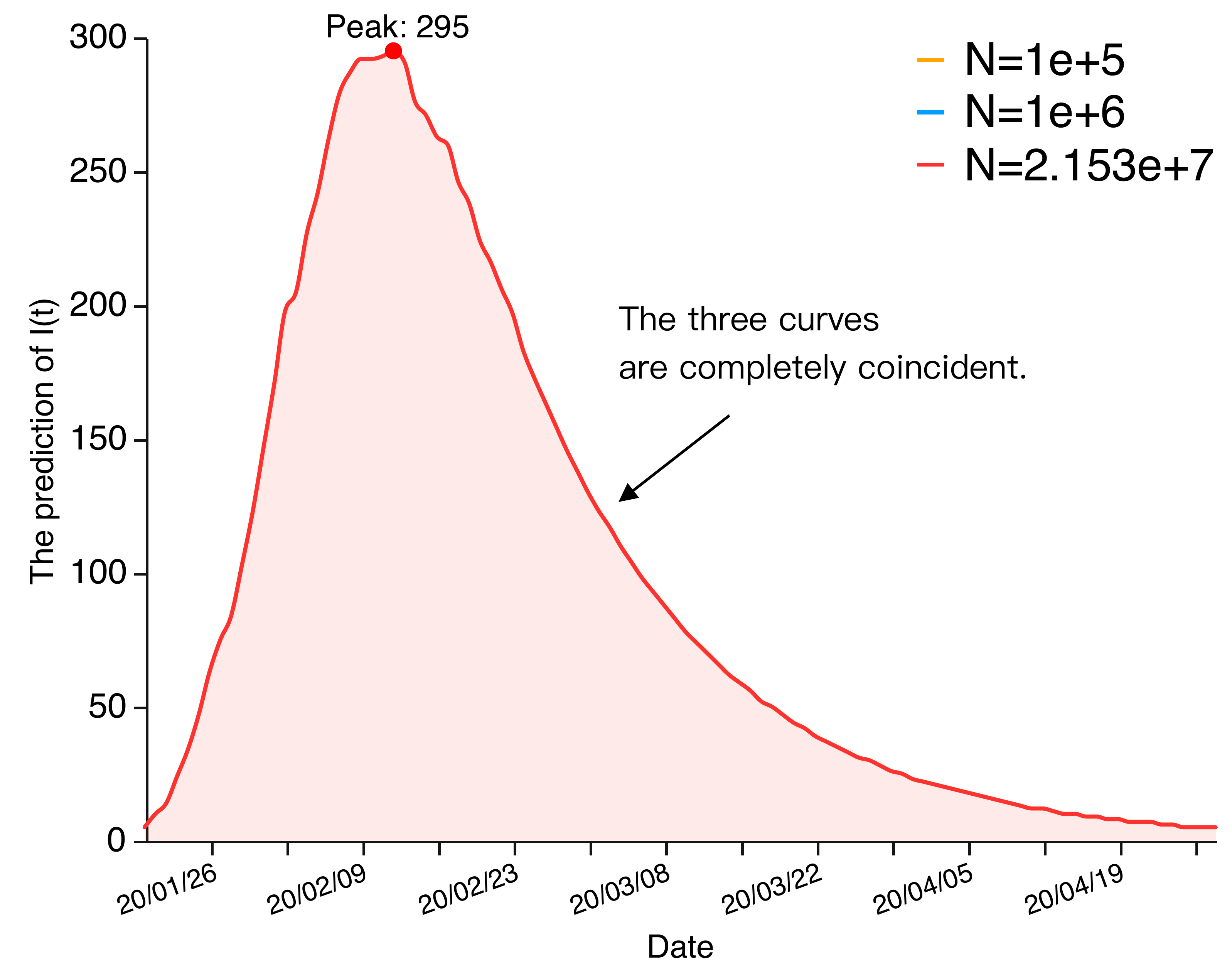}
		\label{fig_beijing_trend1}
	}
	\centering
	\caption{The forecasting of the epidemic trend of $I(t)$ with different population sizes.}
	\label{identifyN}
\end{figure}

Therefore, for global prediction, it is required to select the population size carefully. Through the selection criterion (\ref{criterion2}) and setting $J=7$, we select the population size based on the prevalence data of COVID-19 epidemic \cite{NHC}, \cite{Tencent} from the date of the first confirmed infection to April 2, 2020. 

Table \ref{tab1} gives the selected values of $N$ and national populations \cite{Population} of several major epidemic affected countries and GEC. Here, we restate that $N$ is the population size for pathogen transmission. The total national population is divided into the population for pathogen transmission and the isolated uninfected population. It shows that the selected $N$ under public health interventions are far less than the national population. Comparing Iran with Italy, they have a similar total population, however, Iran has a smaller size $N$ for pathogen transmission, i.e, stronger control measures \cite{iran}.

\begin{table}[H] 
	\centering
	\caption{\label{tab1}{The selected $N$ and national population}}
	\begin{tabular}{ccccccccccccc} 
		\toprule 
		Region  & USA & Austria &China  &England  & France & Germany
		\\ 
		\midrule 
		Population  & 3.2717e+8 &8.8223e+6 &13.9273e+8 &6.6274e+7 & 6.4769e+7&  8.2792e+7
		\\	
		Selected $N$ & 2.36e+6 &4.8e+5& 1.22e+6 &1.25e+6	&1.68e+6&1.86e+6
		\\
		\midrule 
		Region  & Iran & Italy & Japan& Netherlands& Spain & GEC
		\\
		\midrule
		Population  &8.2084e7  &6.0484e+7 &1.2653e+8&1.7181e+7 & 4.6733e+7 &6.2013e+09
		\\
		Selected $N$ &9.5e+5& 2.08e+6&5.6e+5&4.2e+5 &1.82e+6&1.6e+7
		\\
		\bottomrule 
	\end{tabular} 
\end{table}
Moreover, to validate the selected population sizes $N$, Fig. \ref{beijing_modelfit} shows the reported confirmed infections from the date of the first confirmed case to April 2, 2020, daily prediction of one time step and the corresponding 95\% CIs for several major epidemic affected countries and GEC, respectively. It indicates that the 95\% CIs can cover the true confirmed infections, no matter what in the outbreak phase (see Fig. \ref{Austria1}, \ref{England1}-\ref{globalcurve}), or end of the outbreak (see Fig. \ref{Chinacurve}).

Specifically, it's worth noting that in Fig. \ref{Chinacurve} the number of cumulative confirmed infections has a sudden increase owing to the revision of the diagnosis scheme in China on February 12, 2020 \cite{NHC1}, the stochastic SEIR  model under public health interventions (\ref{stochastic SEIRC1})-(\ref{stochastic SEIRC6}) and the UKF method can quickly track the state in several time steps, and then predict the state with a small CI. Besides, Fig. \ref{globalcurve} indicates the epidemic trend of COVID-19 globally is still in the outbreak with exponential growth. Furthermore, more prediction results of 1-7 days are updated and available on https://github.com/SCU-Bigdata/2019-nCoV-Forecast in time since February 7, 2020.
\begin{figure}[H]
	\centering  
	\vspace{-0.35cm} 
	\subfigtopskip=2pt 
	\subfigbottomskip=2pt 
	\subfigcapskip=-5pt 
	
	\subfigure[Austria]{
		\includegraphics[width=0.3\linewidth]{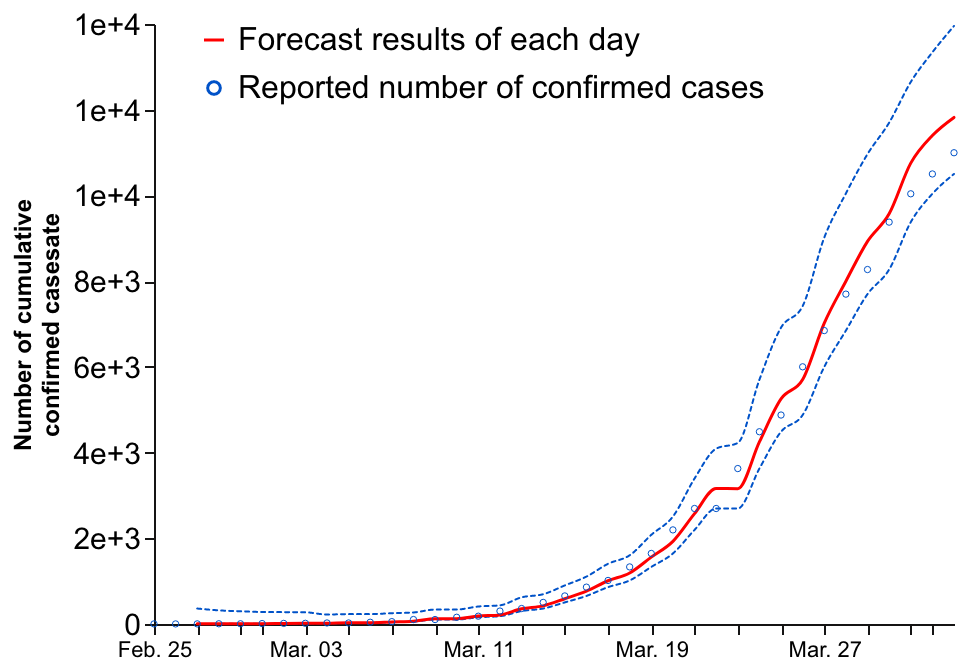}
		\label{Austria1} }%
	\subfigure[China]{
		\includegraphics[width=0.3\linewidth]{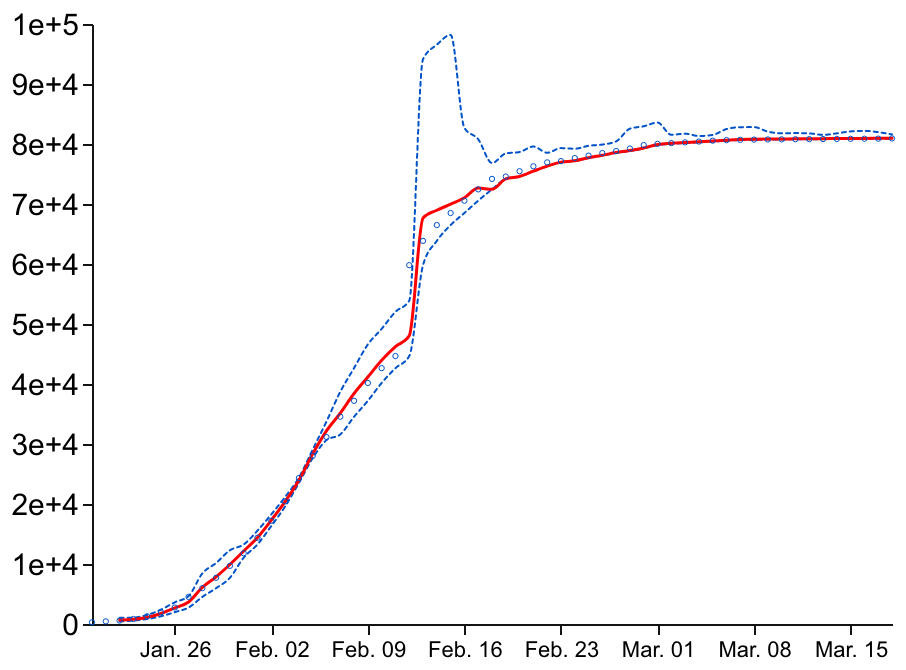}
		\label{Chinacurve}
	}
	\subfigure[England]{
		\includegraphics[width=0.3\linewidth]{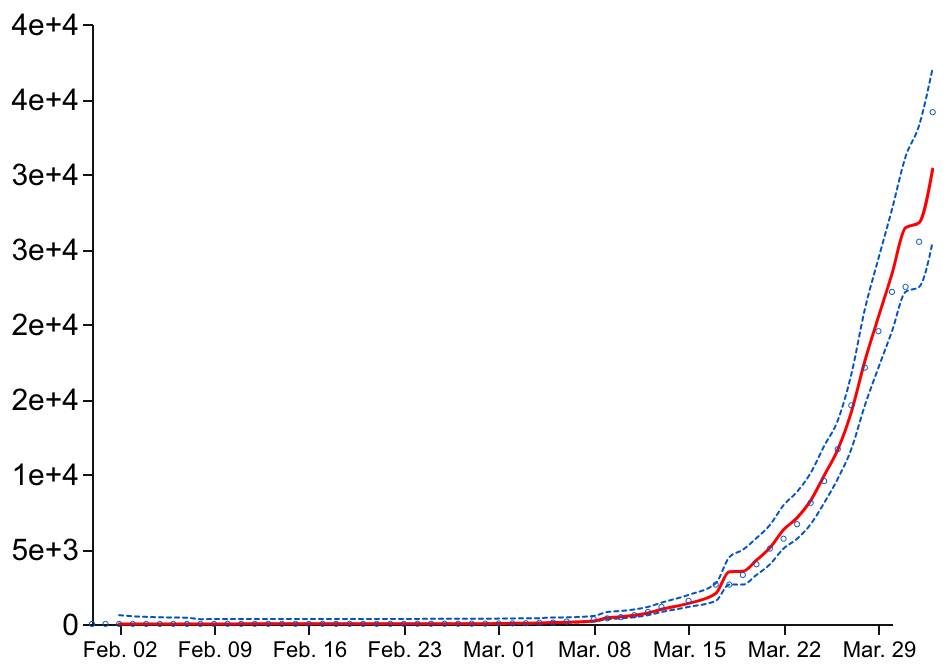}
		\label{England1} }
	
	\subfigure[France]{
		\includegraphics[width=0.3\linewidth]{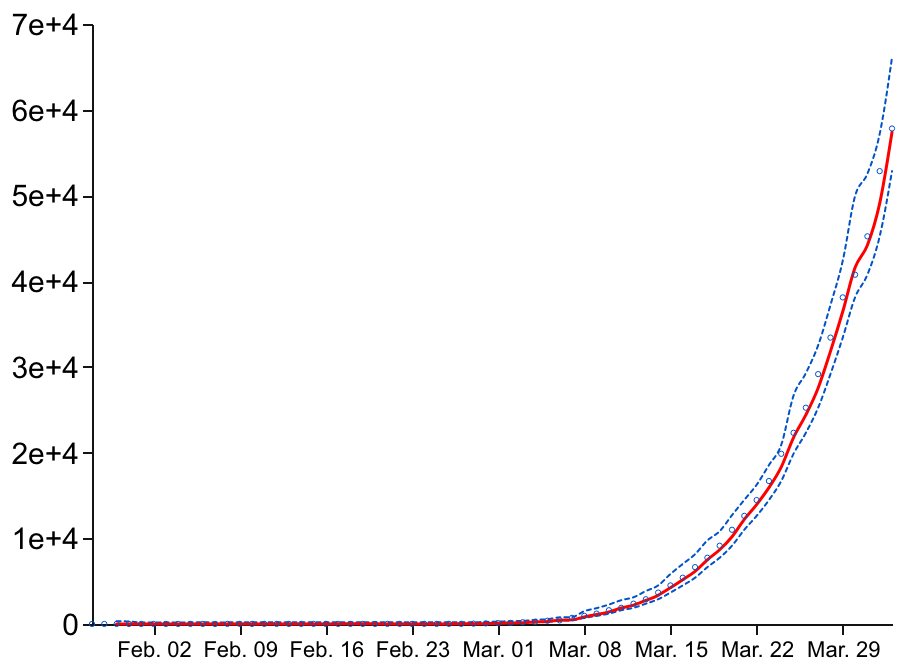}
	}
	\subfigure[Germany]{
		\includegraphics[width=0.3\linewidth]{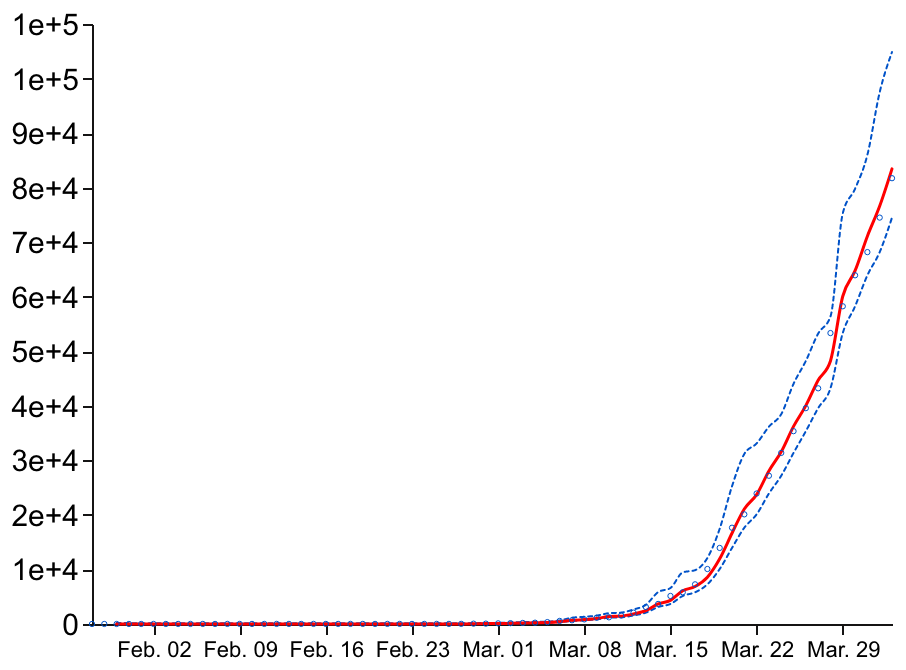}
	}%
	\subfigure[Italy]{
		\includegraphics[width=0.3\linewidth]{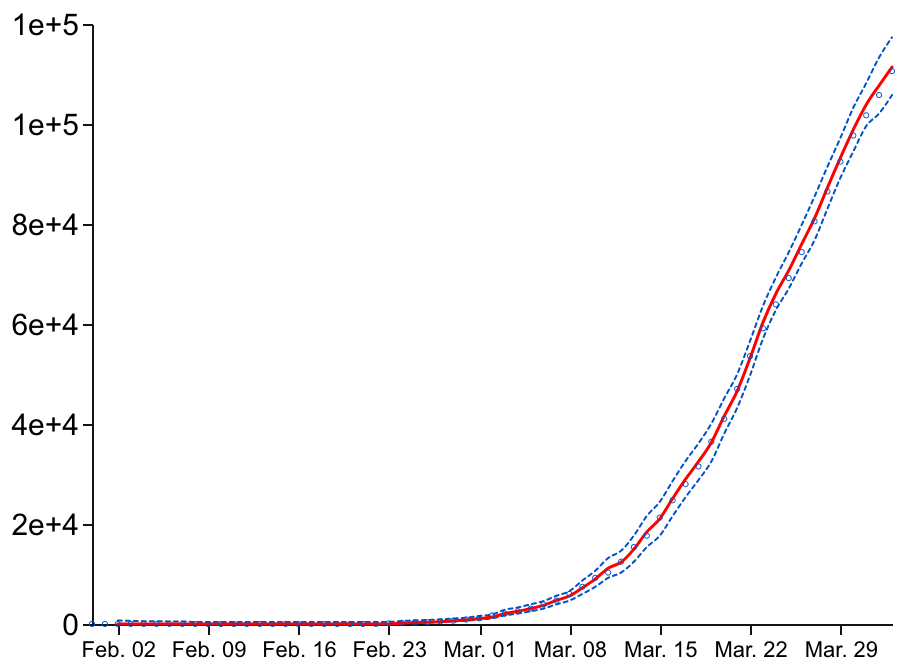}
		\label{Italy1}}%
	
	\subfigure[Iran]{
		\includegraphics[width=0.3\linewidth]{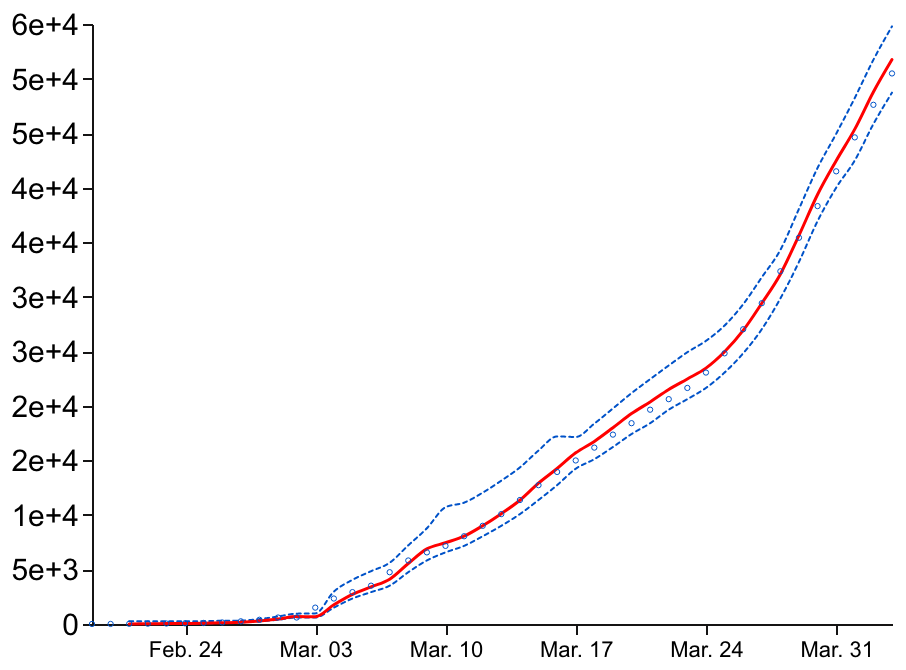}
		\label{B-iran1}}%
	\subfigure[Japan]{
		\includegraphics[width=0.3\linewidth]{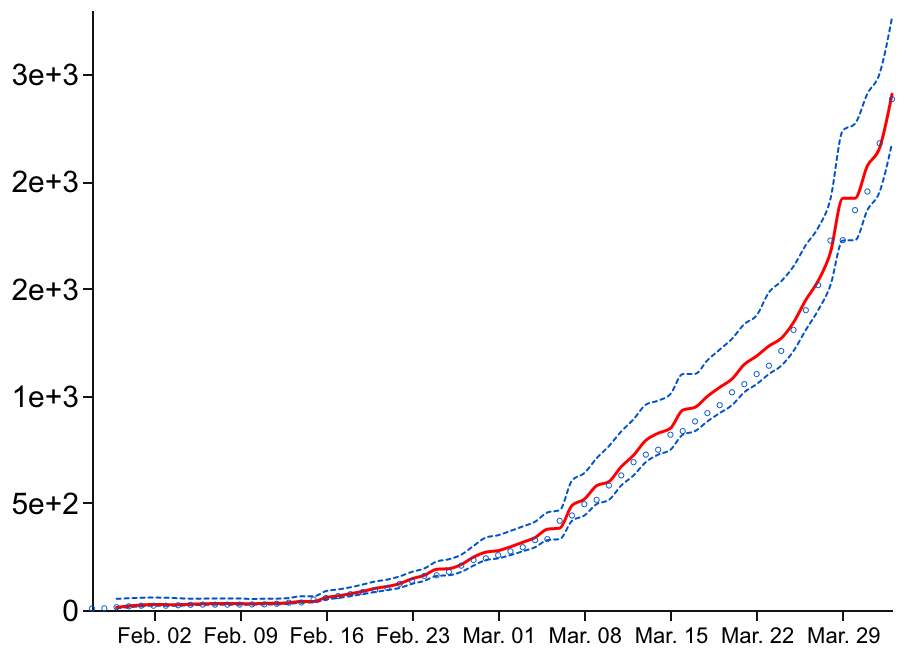}
	}%
	\subfigure[Netherlands]{
		\includegraphics[width=0.3\linewidth]{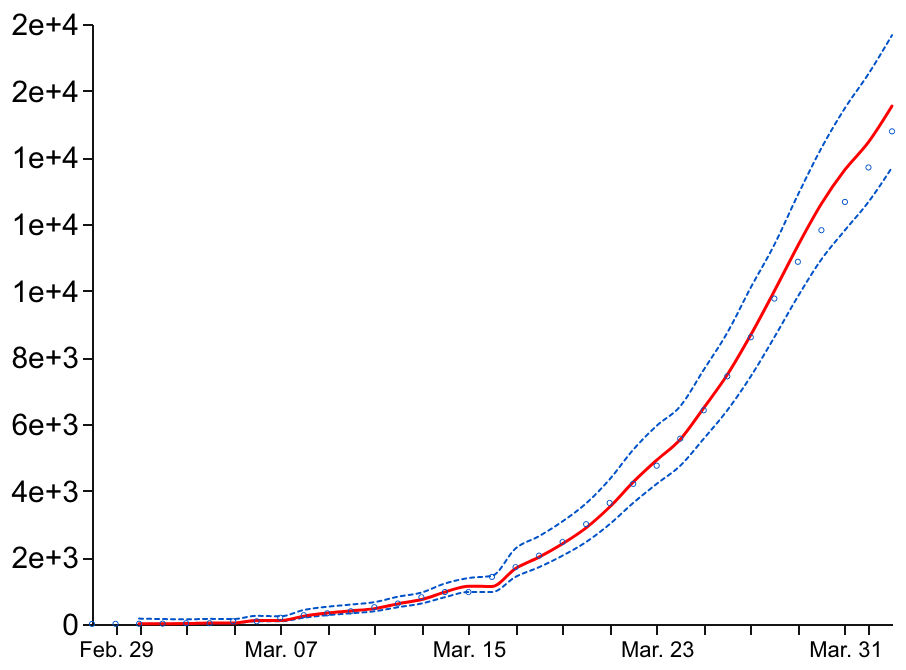}
		\label{Koreacurve}
	}%

	\subfigure[Spain]{
		\includegraphics[width=0.3\linewidth]{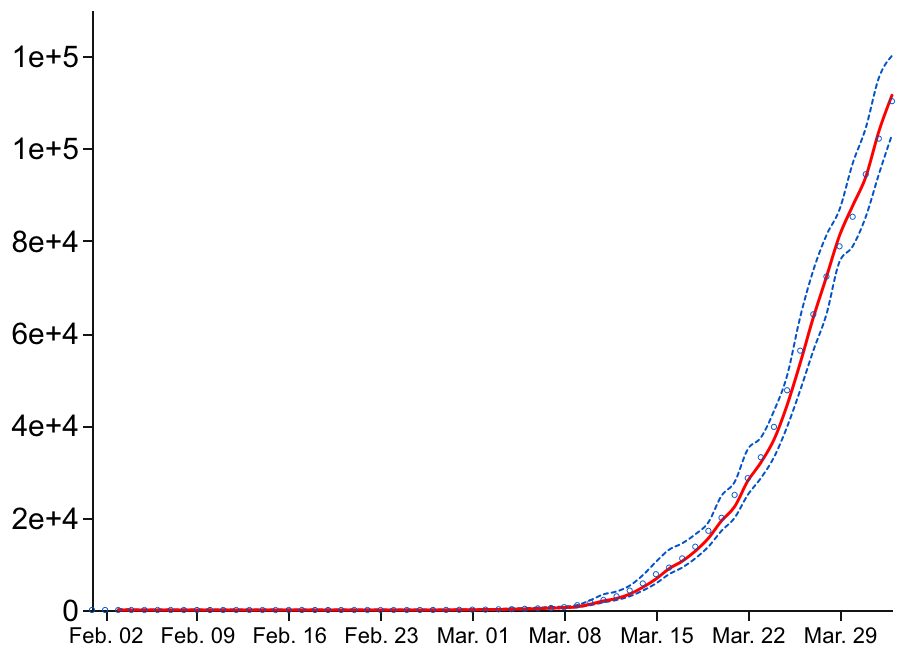}
		\label{Spain1}}%
		\subfigure[USA]{
		\includegraphics[width=0.3\linewidth]{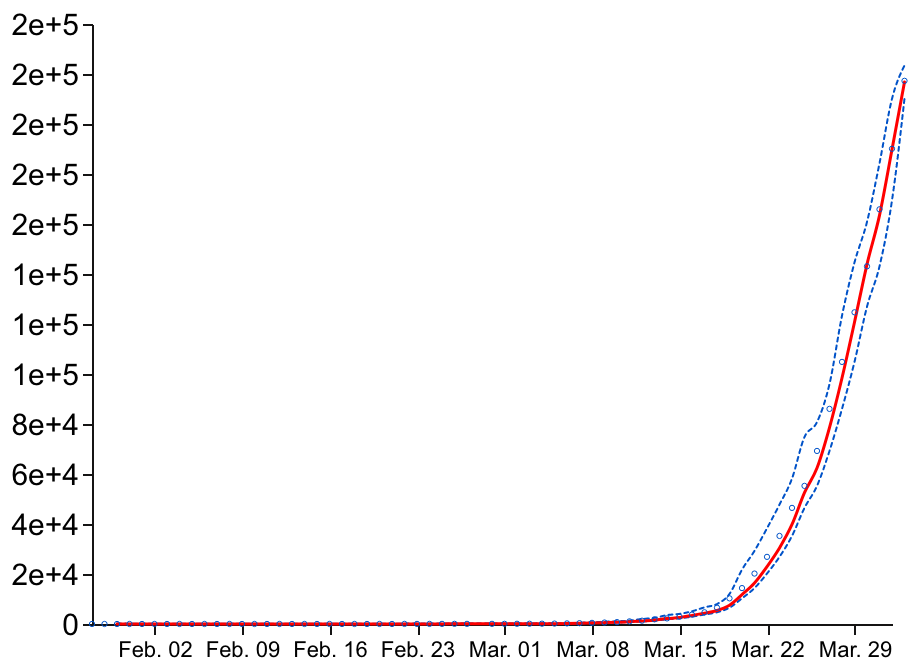}\label{America1}}
	\subfigure[GEC]{
		\includegraphics[width=0.3\linewidth]{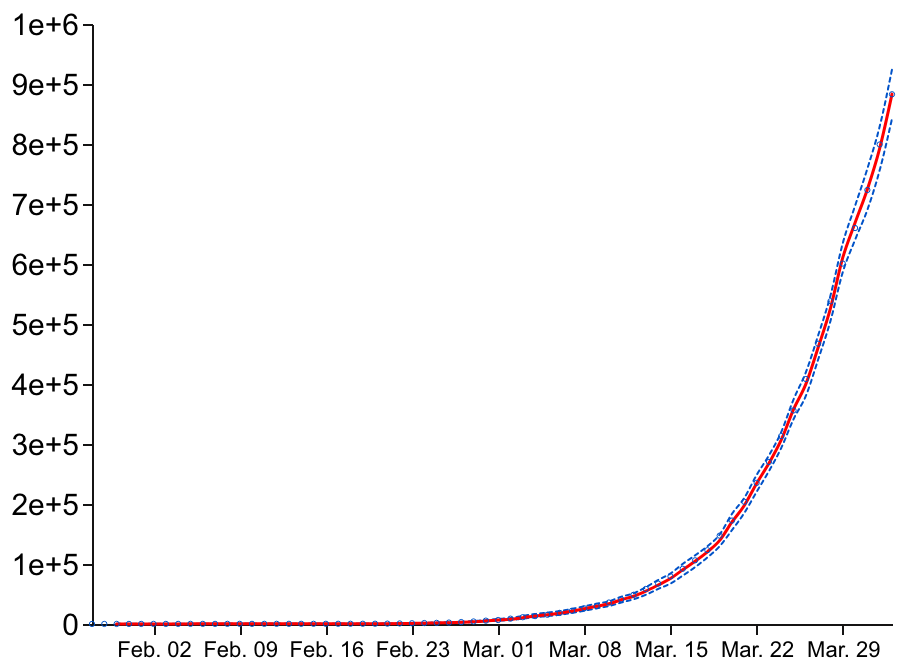}\label{globalcurve}}	
	\centering
	\caption{Daily prediction of one time step and 95\% CI derived by the model (\ref{stochastic SEIRC1})-(\ref{stochastic SEIRC6}) though continuous-discrete UKF. }
	\centering
	\label{beijing_modelfit}
\end{figure}

Fig. \ref{R01} shows the estimation of the basic reproductive number $R_0$ of China from January 21, 2020 to March 18, 2020 when there were no local newly confirmed cases \cite{NHC}. It shows that the basic reproductive number increases monotonically early on in the outbreak, then monotonically decreases with a sharp infection point. This is consistent with the policy decision in which Chinese government implemented unprecedented public health interventions to prevent the spread of the epidemic and  the incubation period is about 7-14 days \cite{zhong}. The reason for monotonic increase in the early stages of the epidemic may be that, in a limited local city, the density of infections increases as a result of the infections increasing exponentially. It results in a quick growth of the contact rate $\alpha(t)$. The recovery rate is very low due to the recovery cycle. The reason for monotonic decrease is that the public health interventions (e.g. scales of quarantine, strict controls on travel, hand hygiene, and use of face masks etc. \cite{wuhancontrol}) significantly decrease  the contact rate $\alpha(t)$ and the recovery rate gradually increases, which results in a monotonic decrease of $R_0$. In Fig. \ref{R02}, although the basic reproductive number of GEC has passed the infection point, which is consistent with the policy decision in which European countries and USA have implemented strict public health interventions, the current $R_0$ is still high and fluctuates around 3.64. 
\begin{figure}[H]
	\centering
		\vspace{0.35cm} 
	\subfigure[China]{
		\includegraphics[width=0.45\linewidth]{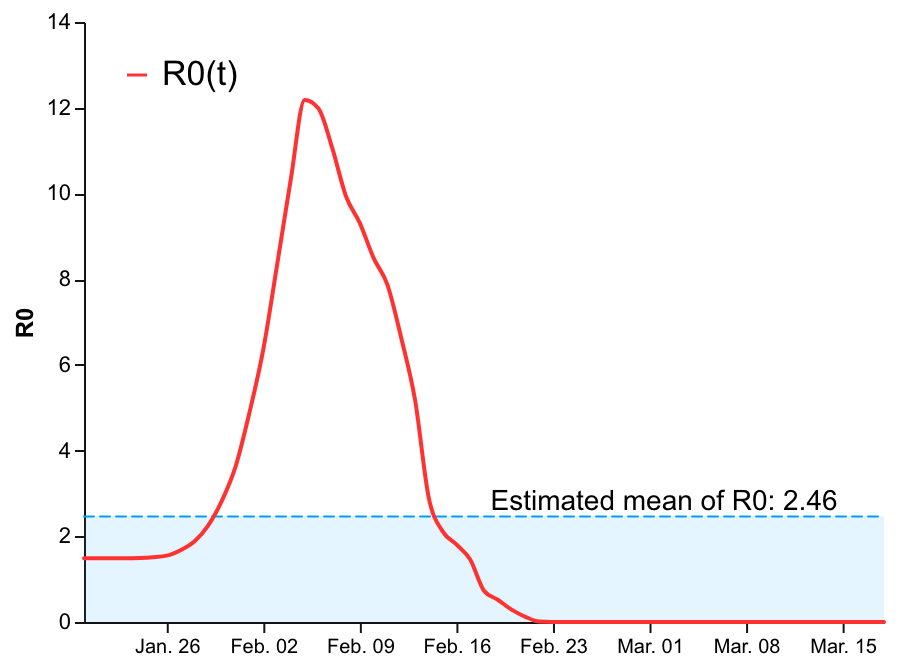}
		\label{R01}
	}
	\subfigure[GEC]{
		\includegraphics[width=0.45\linewidth]{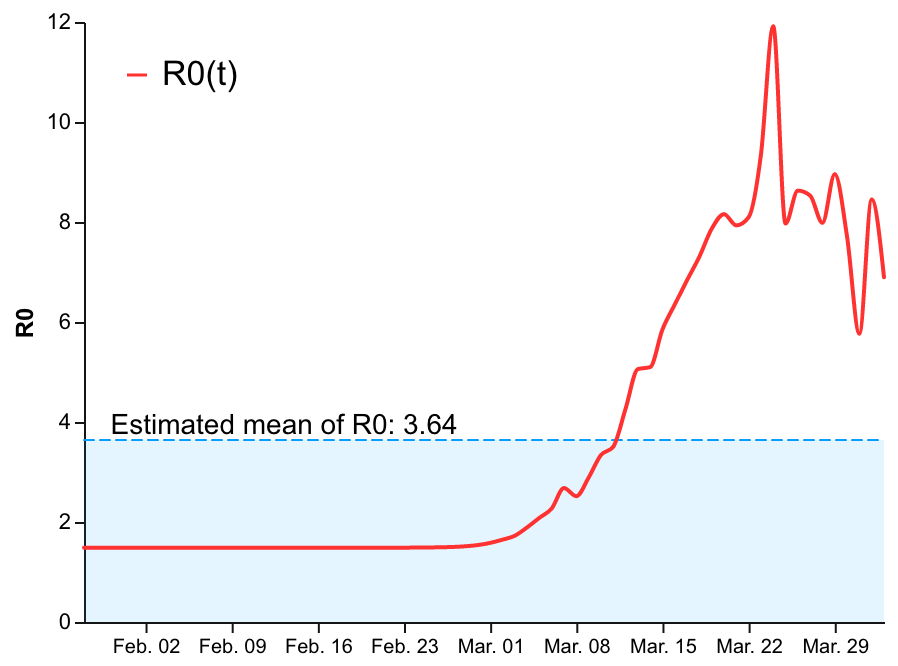}
		\label{R02}
	}
	\centering
	\caption{Estimation results for the basic reproductive number.}
	\label{R00}
\end{figure}

We also present the estimated mean values and 95\% CI of the key epidemiological parameters of GEC and China in Table \ref{tab2}. The mean of the basic reproductive number of COVID-19 in China is 2.46 (95\% CI: 2.41-2.51). This finding aligns with other recent estimates of the basic reproductive number for this time period \cite{SEIR_cite3, du2020risk, R01, R02, riou2020pattern, imai2020report}. 
\begin{table}[H]
	\centering
	\caption{\label{tab2}{The estimated mean and 95\% CI of parameters}}
	\begin{tabular}{lcc} 
		\toprule 
		Parameters  & GEC & China
		\\ 
		\midrule 
		Contact rate	$\alpha$ &0.17 $\left(0.12, 0.22\right)$ &0.07 $\left(0, 0.23\right)$
		\\
		Transfer rate $\beta$ &0.12  $\left(0.07, 0.17\right)$ &0.06  $\left(0, 0.18\right)$
		\\
		Remove rate	$r^1+r^2$ &0.07  $\left(0, 0.14\right)$&0.06  $\left(0, 0.12\right)$
		\\
		Basic reproductive number	$R_0$ &3.64 $\left(3.55, 3.72\right)$ &2.46  $\left(2.41, 2.51\right)$
		\\
		\bottomrule
	\end{tabular} 
\end{table}

\subsection{The Nowcasting of Local and Global Infections}
The number of latent infections $E(t)$ can be estimated in time again by UKF nonlinear filtering. Fig. \ref{hidenE} shows the reported cumulative confirmed infections, and the estimate of the actual infections, where the difference between the two lines represents $E(t)$. It indicates that the situation in USA is the most serious one and it has $1.90\times10^5$ (95\% CI: $1.87\times10^5$-$1.92\times10^5$) latent infections on April 2, 2020. In England, the latent infections account for a very large fraction 58\% of the total number of infections. We infer that the outbreak will commence in the next two weeks with a lag time about 7-14 days \cite{zhong}. Similar results are anticipated for Austria, Japan and Netherlands with ratios between 51\%-54\%. Basically, Italy is near the point of inflection. Moreover, the estimated number of latent infections of GEC is about $7.47\times10^5$ (95\% CI: $7.32\times10^5$-$7.62\times10^5$), which is 32\% of the total infections of GEC and about 0.47 times of the confirmed infections.
\begin{figure}[htbp]
	\centering  
	\vspace{-0.35cm} 
	\subfigtopskip=2pt 
	\subfigbottomskip=2pt 
	\subfigcapskip=-5pt 

	\subfigure[Austria]{
		\includegraphics[width=0.3\linewidth]{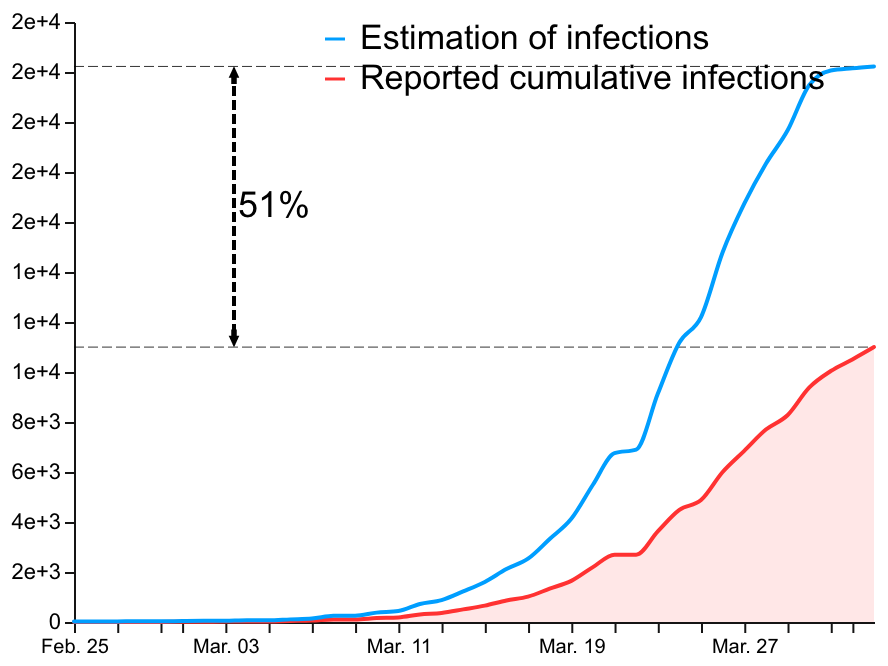}
	}
	\subfigure[England]{
		\includegraphics[width=0.3\linewidth]{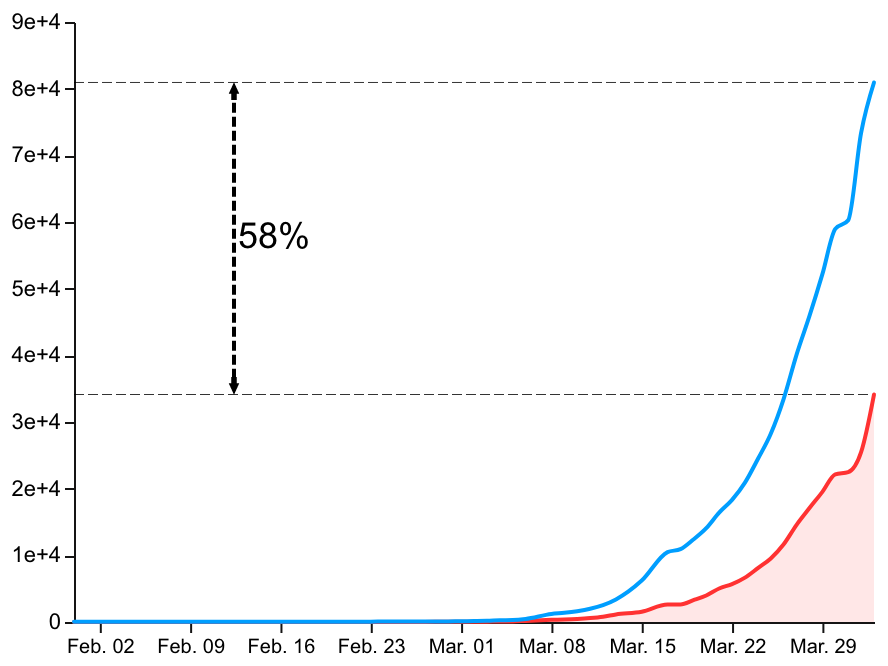}
	}
	\subfigure[France]{
		\includegraphics[width=0.3\linewidth]{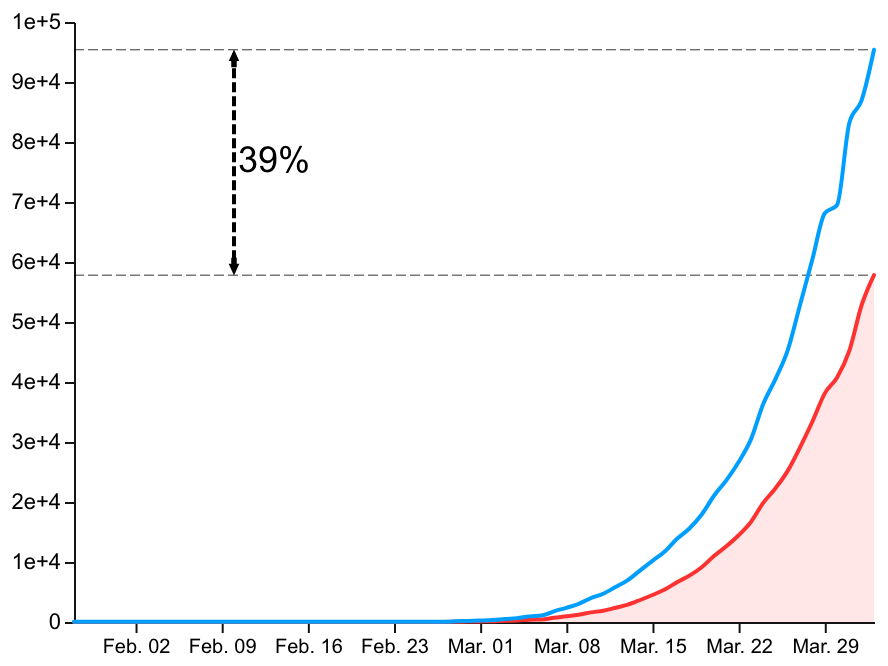}
	}

	\subfigure[Germany]{
		\includegraphics[width=0.3\linewidth]{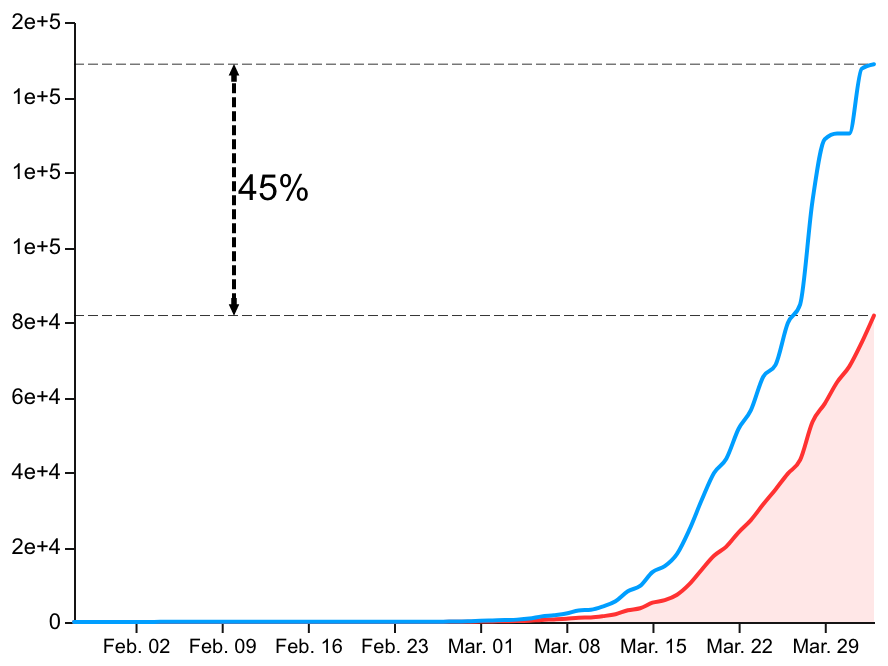}
	}%
	\subfigure[Italy]{
		\includegraphics[width=0.3\linewidth]{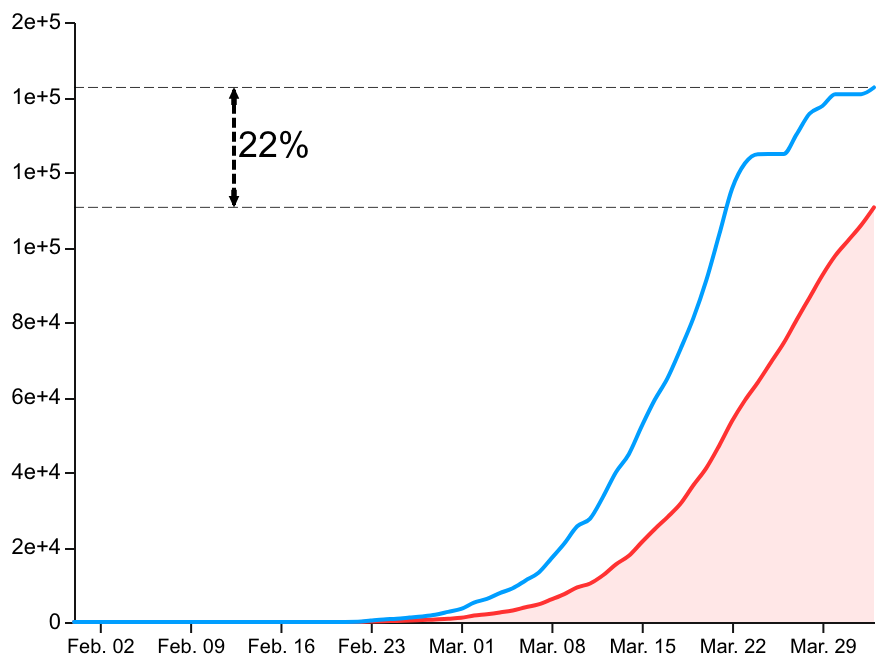}
	}
	\subfigure[Iran]{
		\includegraphics[width=0.3\linewidth]{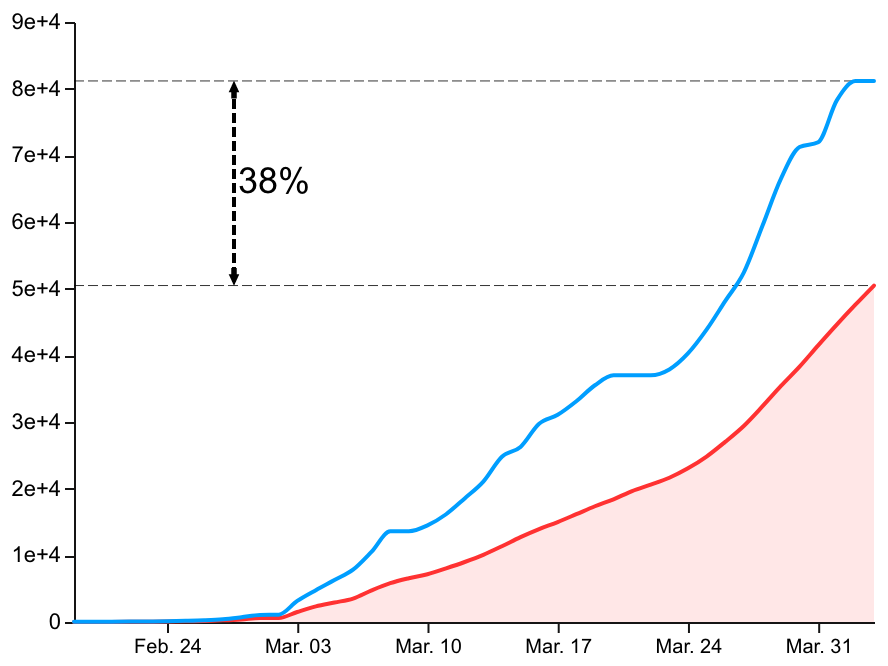}
	}

	\subfigure[Japan]{
		\includegraphics[width=0.3\linewidth]{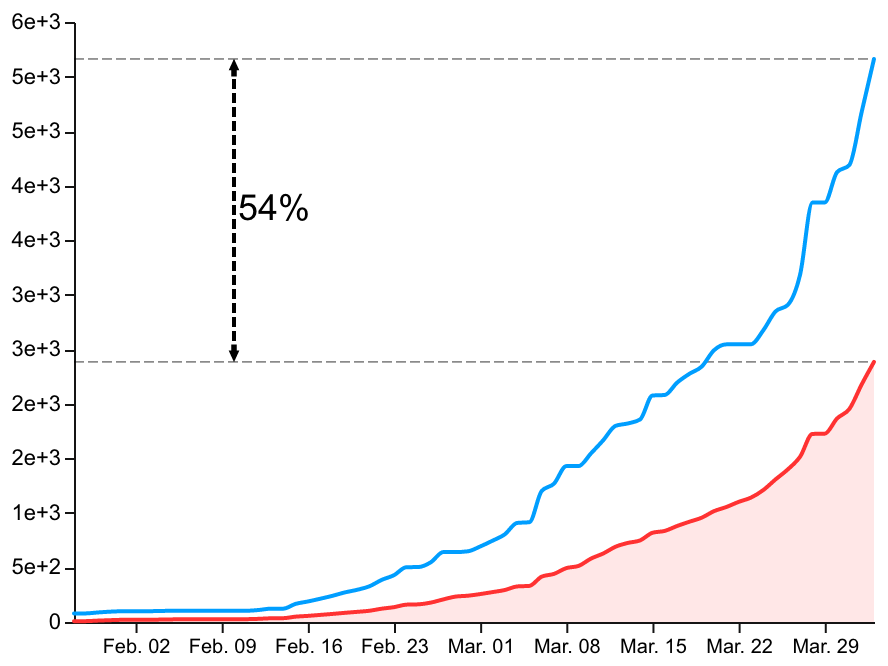}
	}
	\subfigure[Netherlands]{
		\includegraphics[width=0.3\linewidth]{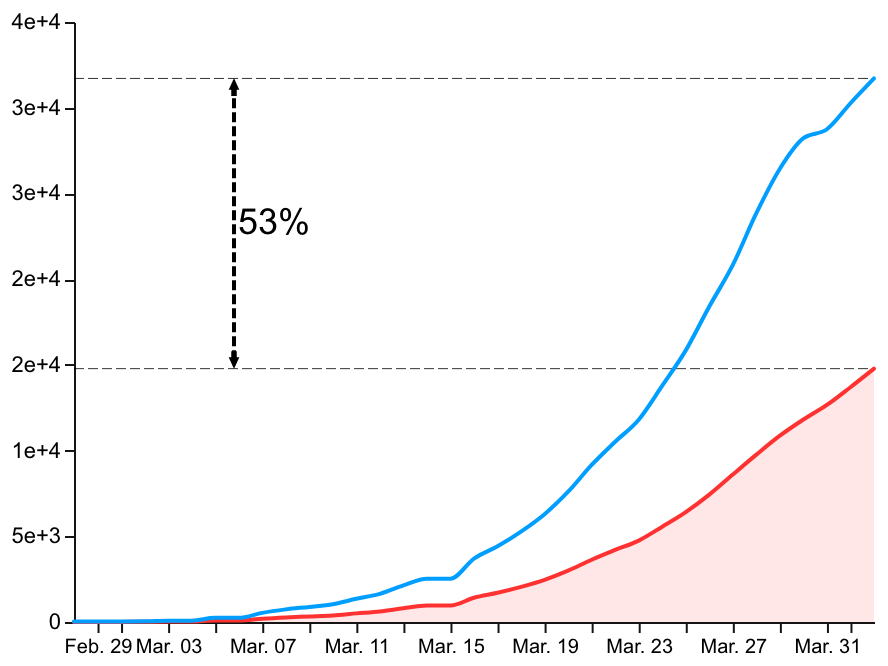}
	}
	\subfigure[Spain]{
		\includegraphics[width=0.3\linewidth]{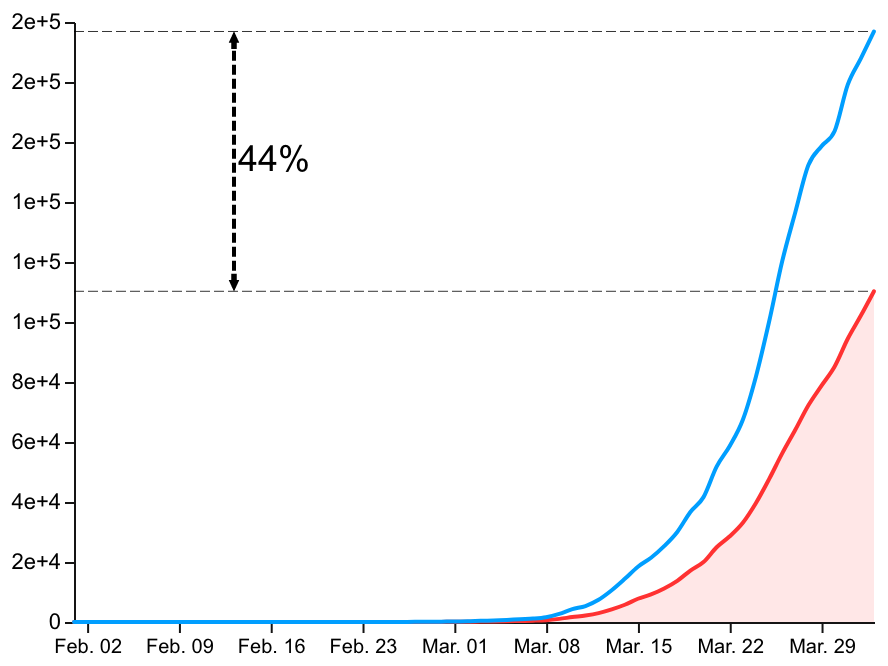}
	}

	\subfigure[Switzerland]{
		\includegraphics[width=0.3\linewidth]{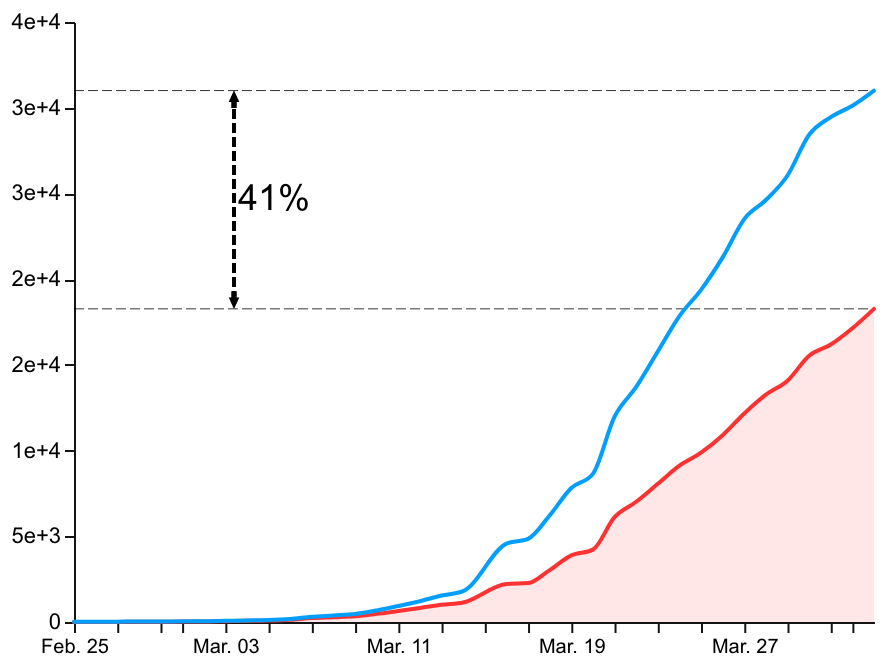}
		\label{KoreaE}
	}
	\subfigure[USA]{
	\includegraphics[width=0.3\linewidth]{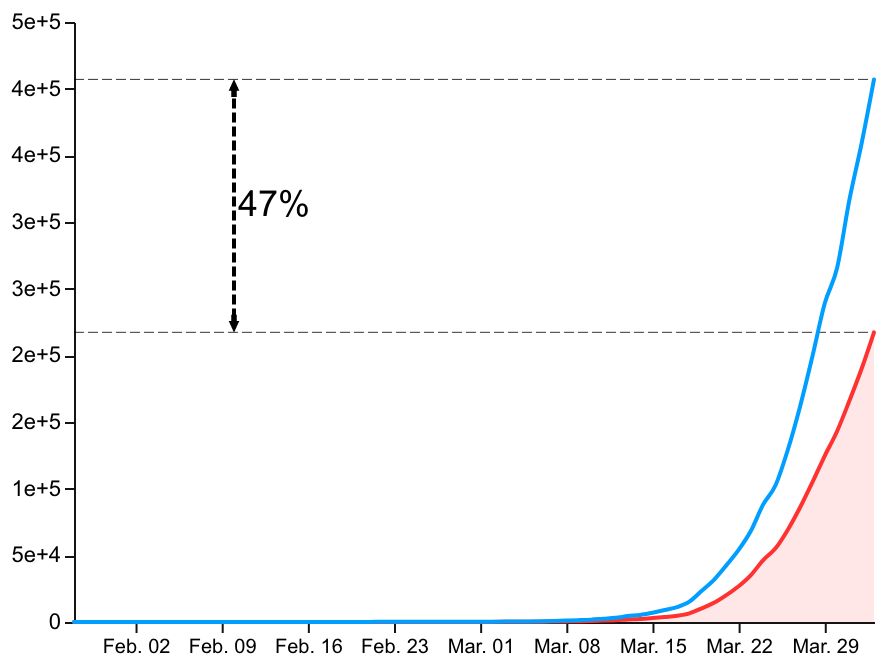}}	
	\subfigure[GEC]{
		\includegraphics[width=0.3\linewidth]{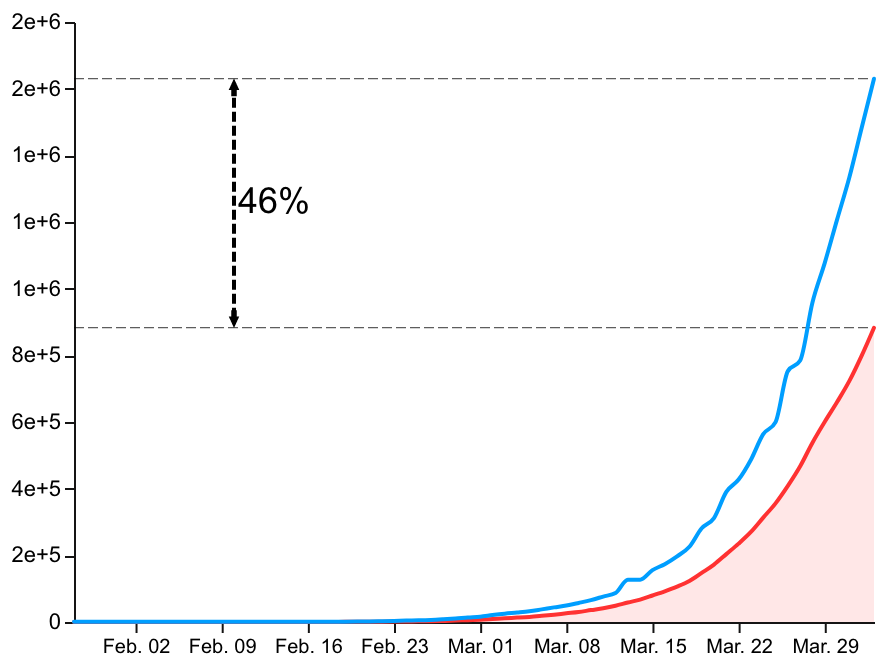}
	}
	\centering
	\caption{The reported number of cumulative confirmed infections and the estimate of the total number of infections.}
	\label{hidenE}
\end{figure}

Herein, we estimated the latent infections in 109 major epidemic affected countries worldwide except China where the reported infections are over 100 cases. Fig. \ref{first_confirm111} shows the estimate of the latent infections $E(t)$ on April 2, 2020. It indicates that there are 61 countries, 14 countries and one country with more than 1,000, 10,000, 100,000 latent infections, respectively. USA is the most serious one as it has $1.90\times10^5$ (95\% CI: $1.87\times10^5$-$1.92\times10^5$) latent infections. Germany and Spain have $6.70\times10^4$ (95\% CI: $6.55\times10^4$-$6.86\times10^4$), $8.66\times10^4$ (95\% CI: $8.59\times10^4$-$8.73\times10^4$) latent infections, respectively. The number of latent infections of Austria, Belgium, Canada, England, France, Italy, Iran, Netherlands, Portugal and Switzerland and Turkey are between 10,000 and 50,000.

\begin{figure}[htbp]
	\centering  
	\vspace{-0.35cm} 
	\subfigtopskip=2pt 
	\includegraphics[scale=0.35]{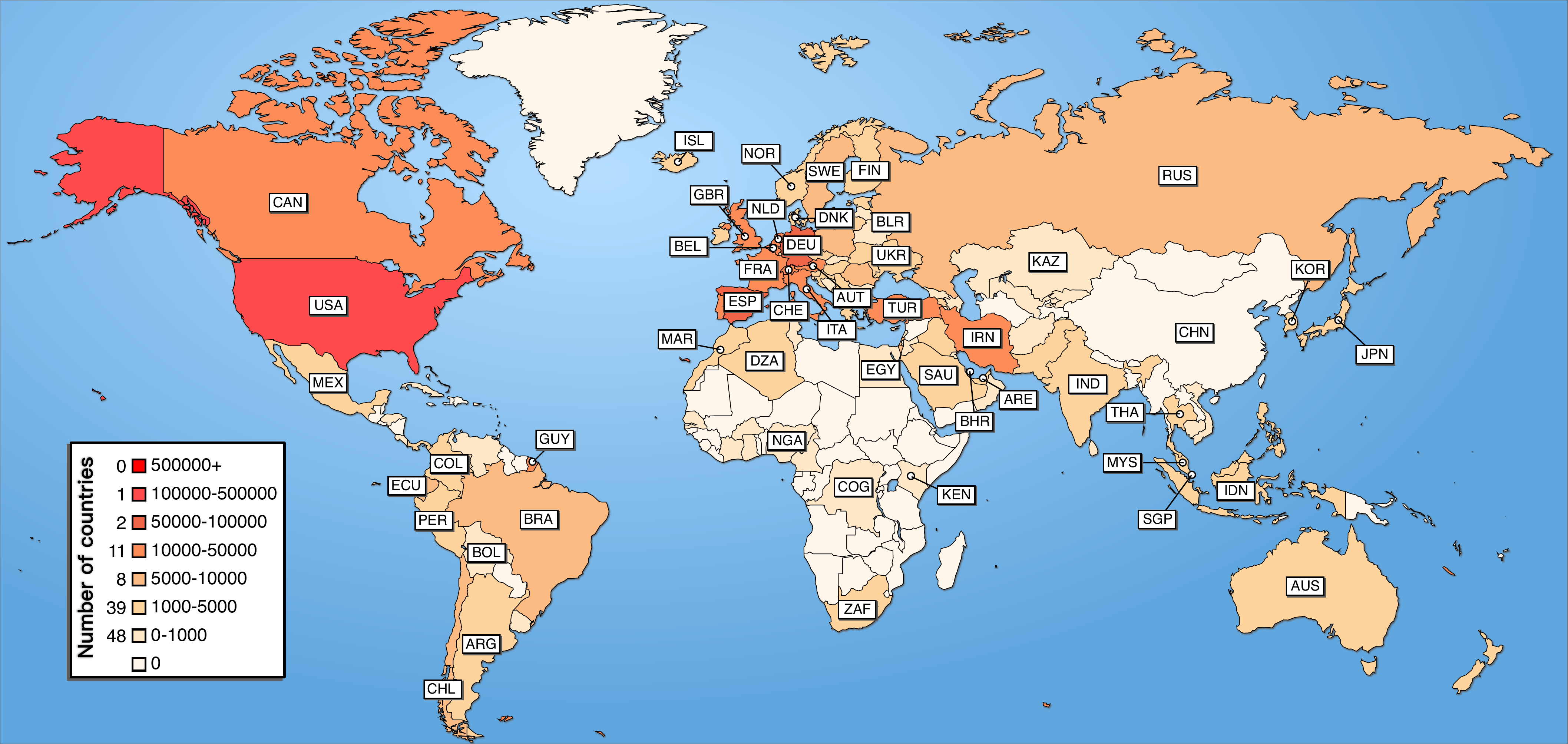}
	\caption{The estimates of the latent infections $E(t)$ in several major epidemic affected countries as of April 2, 2020.}
	\label{first_confirm111}
\end{figure}

\subsection{The Forecasting of the Local and Global Infections}
To compare the global trend for two weeks later, Fig. \ref{current} shows the confirmed infections $I(t)$ around the globe. There are 46 countries, 11 countries and one country where the number of confirmed infections are greater than 1,000, 10,000 and 100,000, respectively. The number of confirmed infections $I(t)$ of USA, Germany, Italy and Spain are about 203,402, 61,556, 80,572 and 73,492, respectively. Some other countries such as Belgium, England, France, Iran, Netherlands, Switzerland and Turkey are between 10,000 and 50,000.
\begin{figure}[htbp]
	\centering  
	\vspace{-0.35cm} 
	\subfigtopskip=2pt 
\includegraphics[scale=0.35]{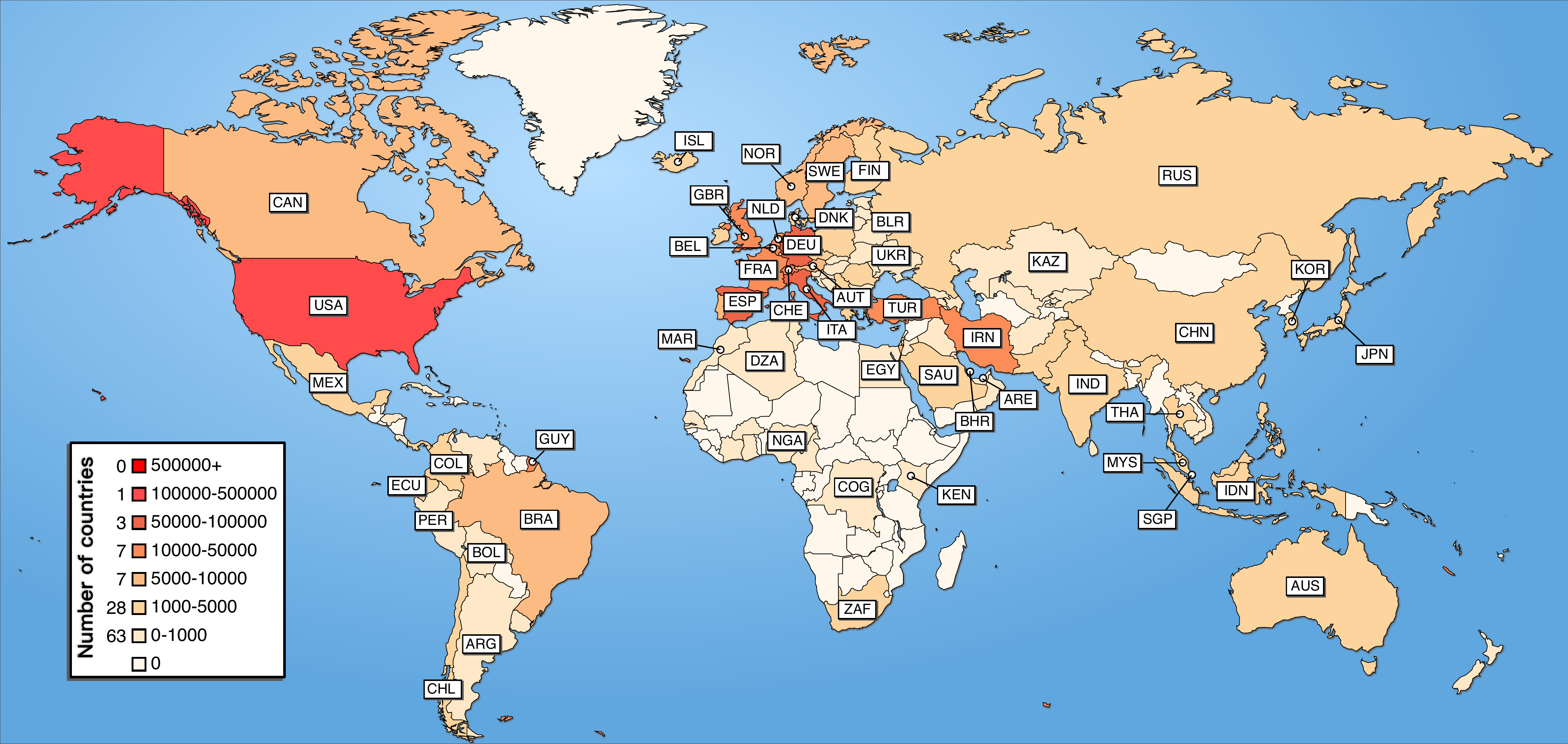}
	\caption{The number of confirmed infections $I(t)$ of the globe as of April 2, 2020.}
	\label{current}
\end{figure}

Fig. \ref{twoweekslater} shows that the trend of $I(t)$ across the globe on April 17, 2020, i.e., two weeks later. Compared with Fig. \ref{current}, there are 71 countries, 26 countries and 4 countries in the considered 109 countries where the number of confirmed infections are greater than 1,000, 10,000 and 100,000, respectively. It indicates that the global trend of $I(t)$ would become worse, especially in some European countries and USA. The prediction of $I(t)$ in USA is the maximum, which would be more than 500,000 on April 17, 2020.
\begin{figure}[H]
	\centering  
	\vspace{-0.35cm} 
	\subfigtopskip=2pt 
	\includegraphics[scale=0.35]{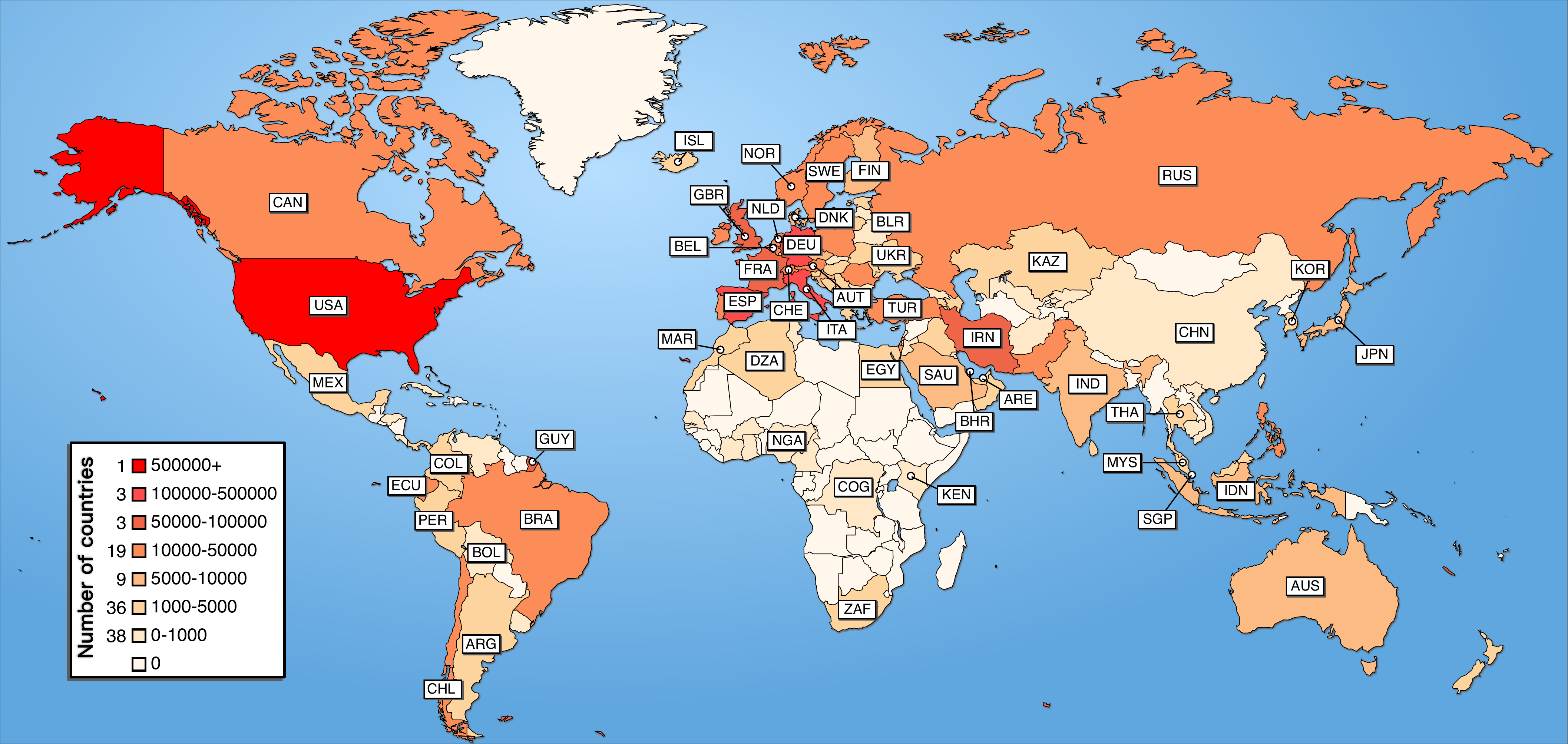}
	\caption{The prediction of the epidemic trend of  $I(t)$ around the globe on April 17, 2020 (two weeks later).}
	\label{twoweekslater}
\end{figure}

Fig. \ref{controlintensity} shows that forecasting of the number, peak and arrival time of the infections in hospitals of major countries under different control measures by changing the population size for pathogen transmission and for fixed contact rate, e.g., 1) maintaining the selected population size $N$ for pathogen transmission, 2) relaxing the control intensity and the population size $N$ has a 50\% increase, 3) strengthening the control intensity and the population size $N$ has a 50\% reduction. We find that except Korea, by decreasing of the population size, i.e., strict control measures, the peaks would have a decrease, especially in Austria, England, France, Germany, Iran, Japan, Netherlands, Spain, Switzerland and USA, and the time of peaks would arrive earlier. It is worth noting that this is different from the results in \cite{SEIR_cite3}. The reason is that the authors in \cite{SEIR_cite3} assume that the population size is fixed and strengthen the control intensity by reducing the contact rate $\alpha(t)$. Thus, it is not contradictory. In Korea, there are no effects of different population sizes. The reason is that Korea is near the end of the epidemic. There are very small number of the latent infections. The prediction of states are same for different $N$ when they are large enough. This is consistent with Propositions \ref{pro0}-\ref{pro}.

\begin{figure}[htbp]
	\centering  
	\vspace{-0.35cm} 
	\subfigtopskip=2pt 
	\subfigbottomskip=2pt 
	\subfigcapskip=-5pt 
	\subfigure[Austria]{
		\includegraphics[width=0.3\linewidth]{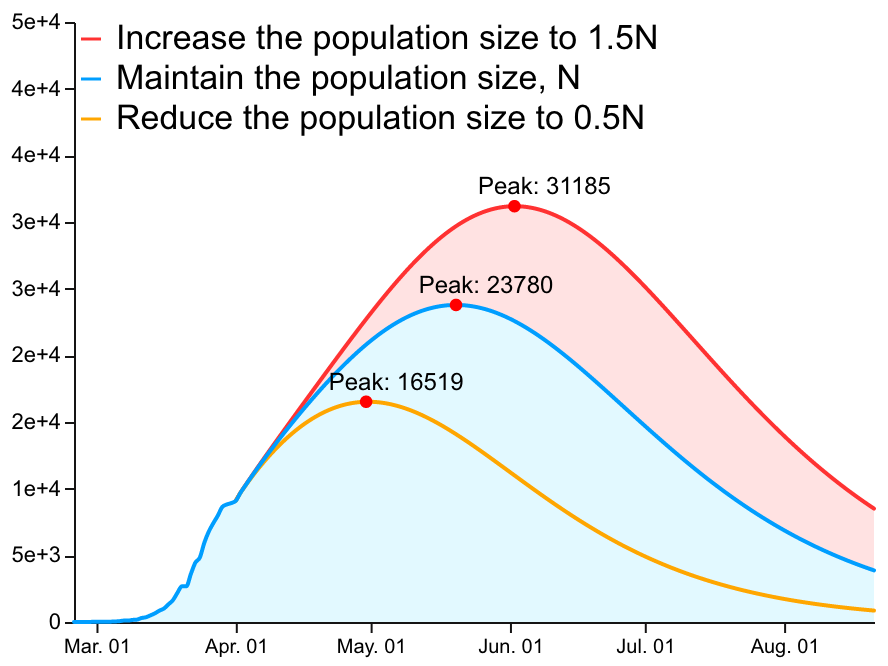}
	}
	\subfigure[England]{
		\includegraphics[width=0.3\linewidth]{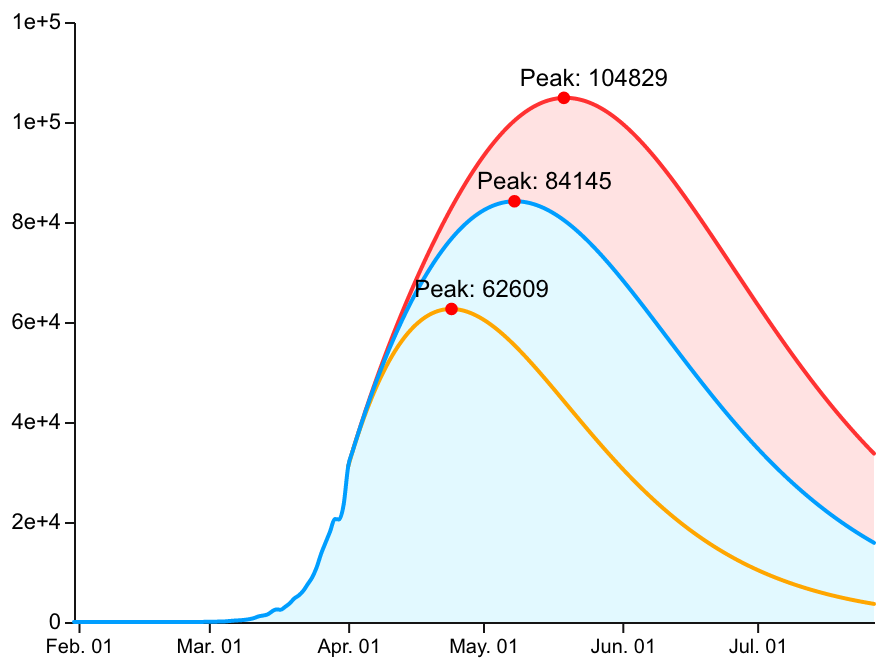}
	}%
	\subfigure[France]{
		\includegraphics[width=0.3\linewidth]{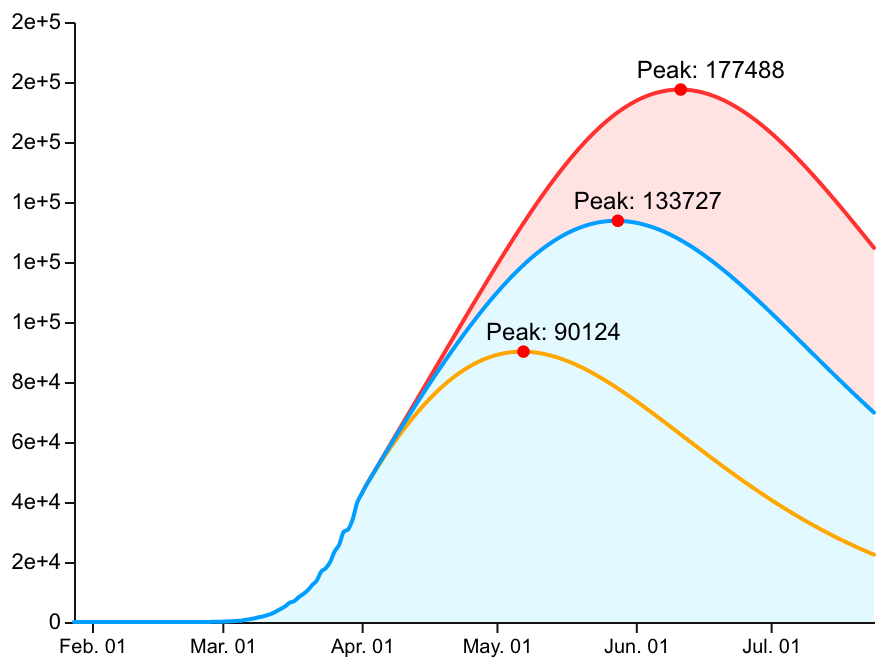}
	}

	\subfigure[Germany]{
		\includegraphics[width=0.3\linewidth]{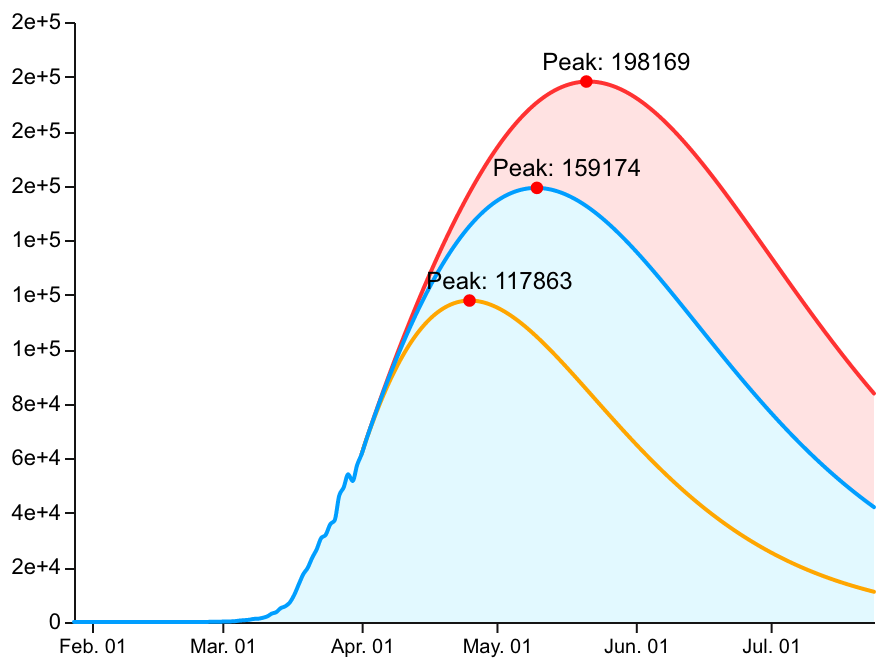}
	}%
	\subfigure[Italy]{
		\includegraphics[width=0.3\linewidth]{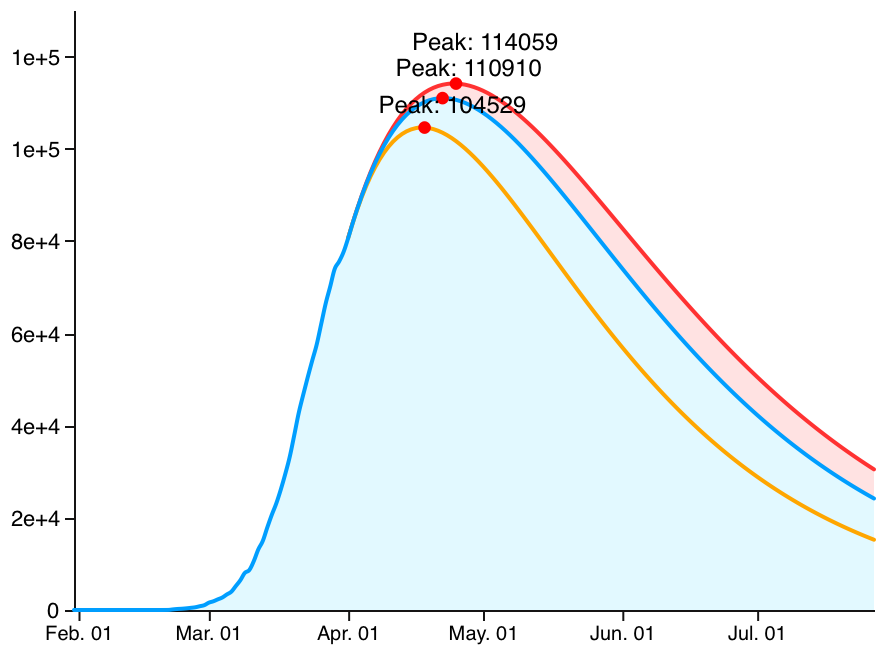}
	}%
	\subfigure[Iran]{
		\includegraphics[width=0.3\linewidth]{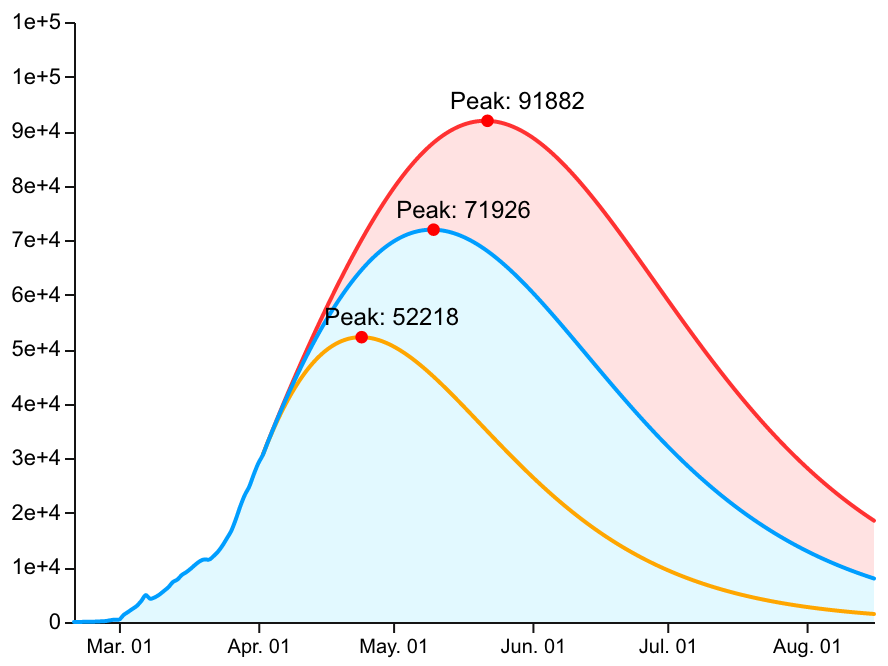}
	}

	\subfigure[Japan]{
		\includegraphics[width=0.3\linewidth]{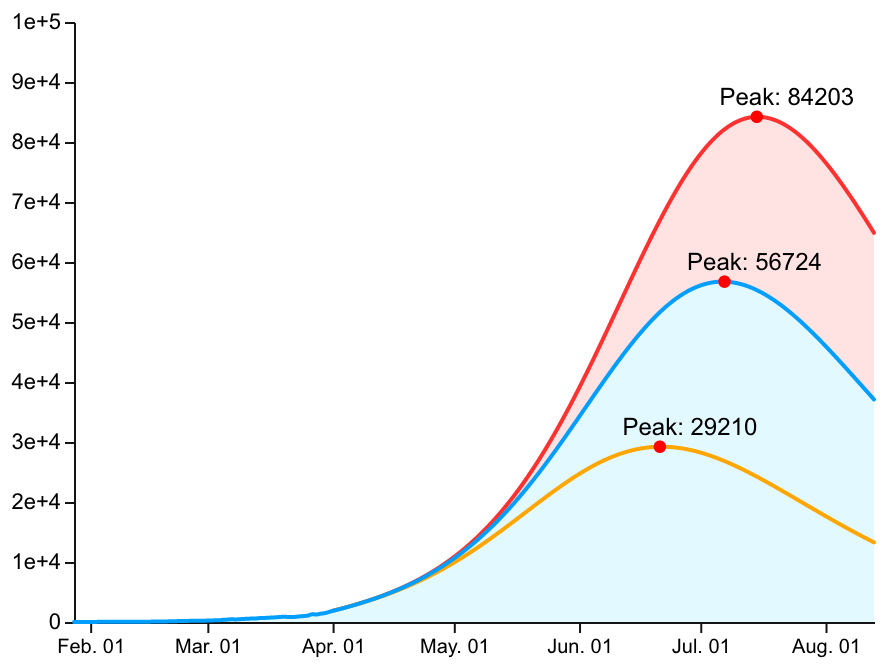}
	}%
	\subfigure[Korea]{
		\includegraphics[width=0.3\linewidth]{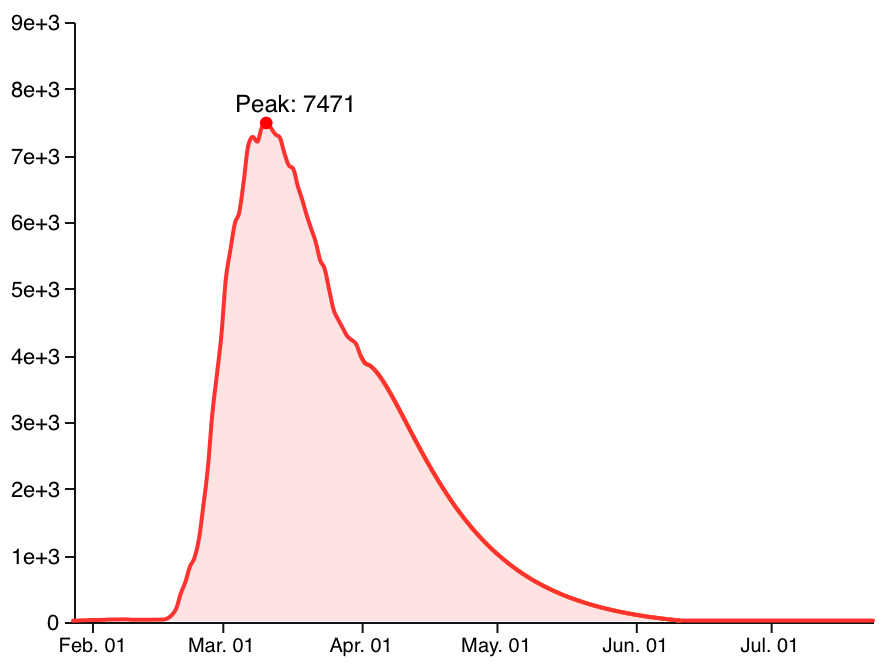}\label{C-control_Korea}
	}%
	\subfigure[Netherlands]{
		\includegraphics[width=0.3\linewidth]{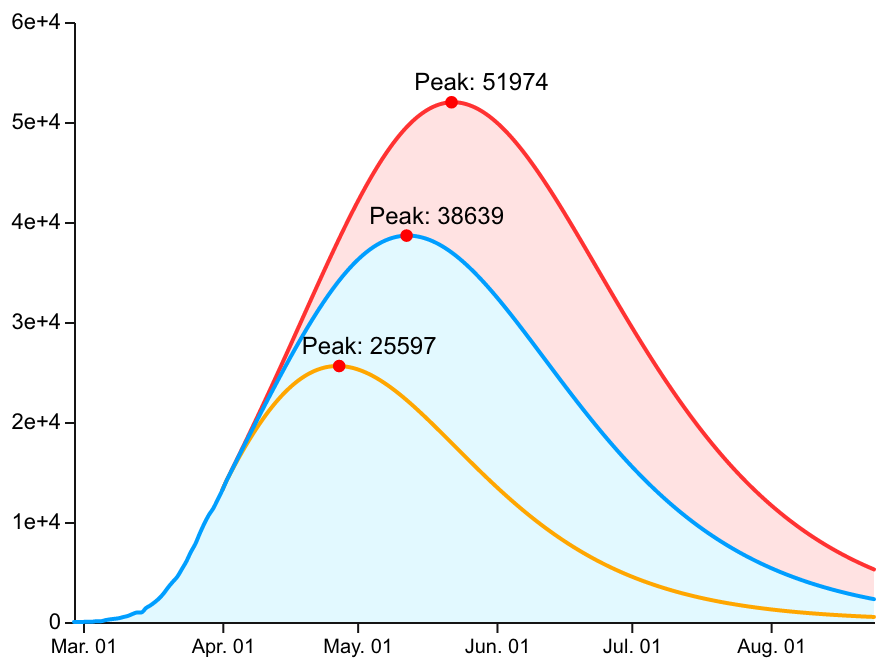}
	}%

	\subfigure[Spain]{
		\includegraphics[width=0.3\linewidth]{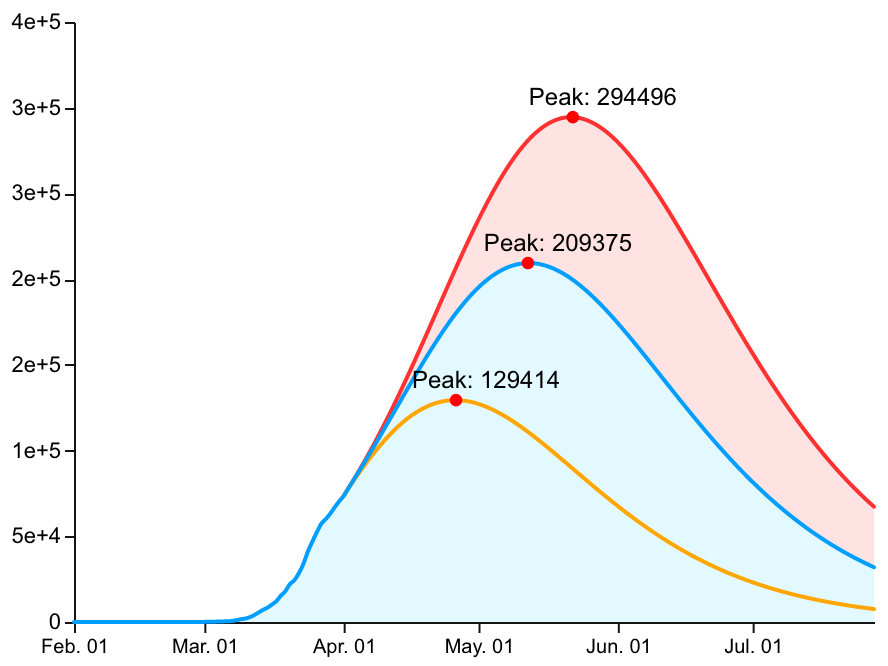}
	}%
	\subfigure[Switzerland]{
		\includegraphics[width=0.3\linewidth]{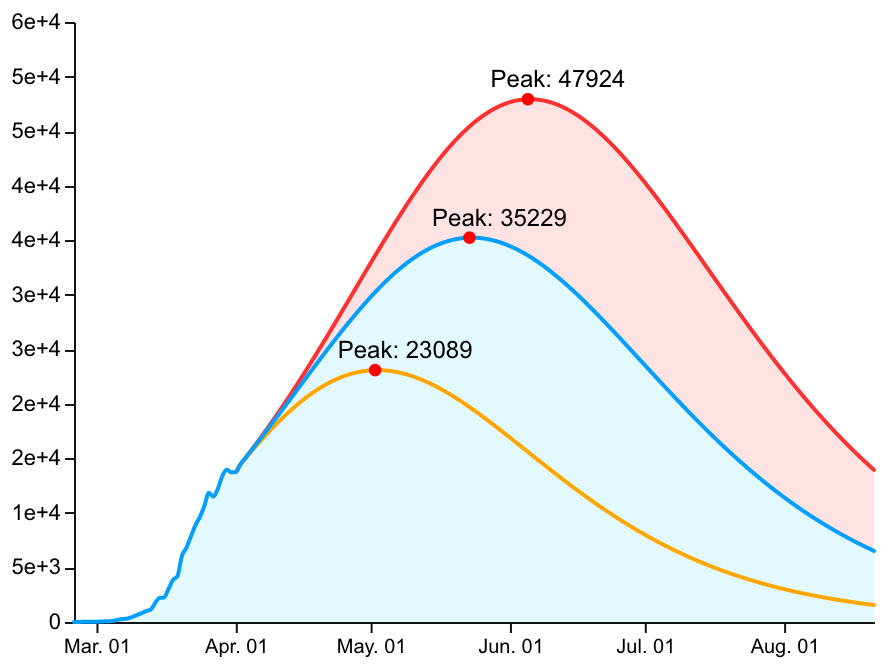}
	}%
	\subfigure[USA]{
	\includegraphics[width=0.3\linewidth]{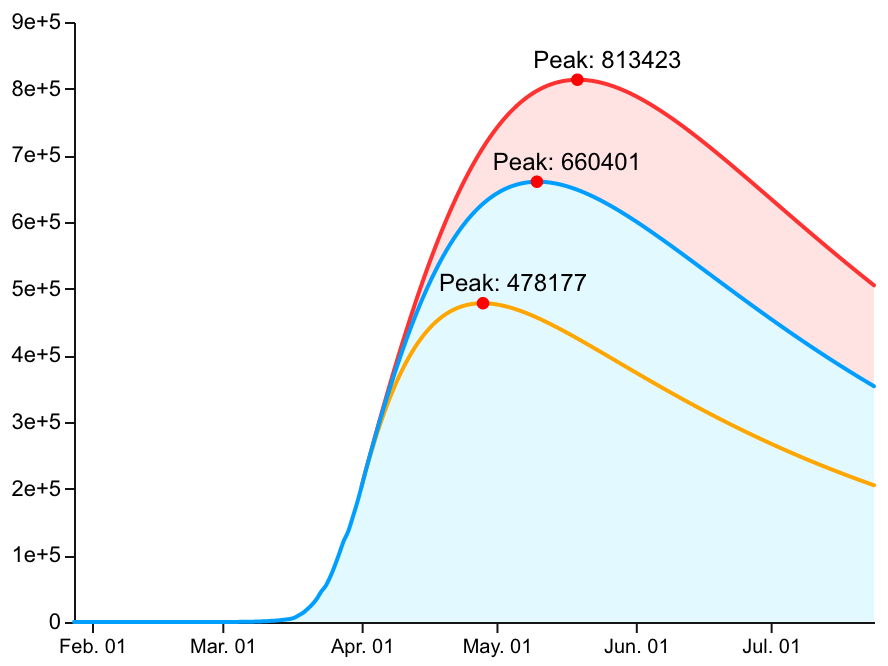}
}%
	\centering
	\caption{The impact of different control intensities implemented by changing the population size on the $I(t)$.}
	\label{controlintensity}
\end{figure}

Furthermore, Fig. \ref{com00}-\ref{com11} show the percentage of reduction of epidemic peak by decreasing the population size and the contact rate, respectively. It indicates that  Korea under different control intensities remains unchanged because it has passed the epidemic peak. However, Austria, England, France, Germany, Iran, Japan, Netherlands, Spain, Switzerland and USA under different control intensities change significantly because they are still in the outbreak period. Italy change very little because it is near the epidemic peak based on the present trajectory. 
\begin{figure}[H]
	\centering  
	\vspace{-0.35cm} 
	\subfigtopskip=2pt 
	\subfigbottomskip=2pt 
	\subfigcapskip=-5pt 
	\subfigure[Reduction of the  population size]{
		\includegraphics[scale=0.2]{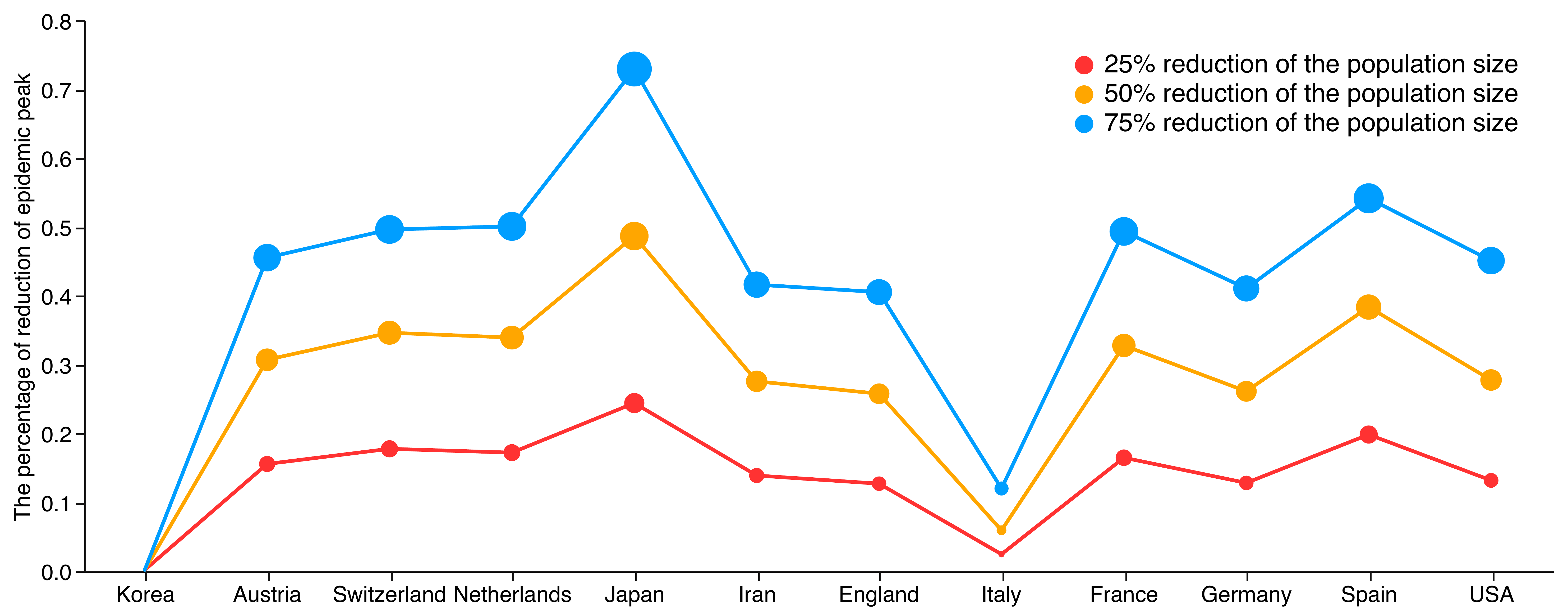}
		\label{com00}
	}
	\subfigure[Reduction of the contact rate]{
		\includegraphics[scale=0.2]{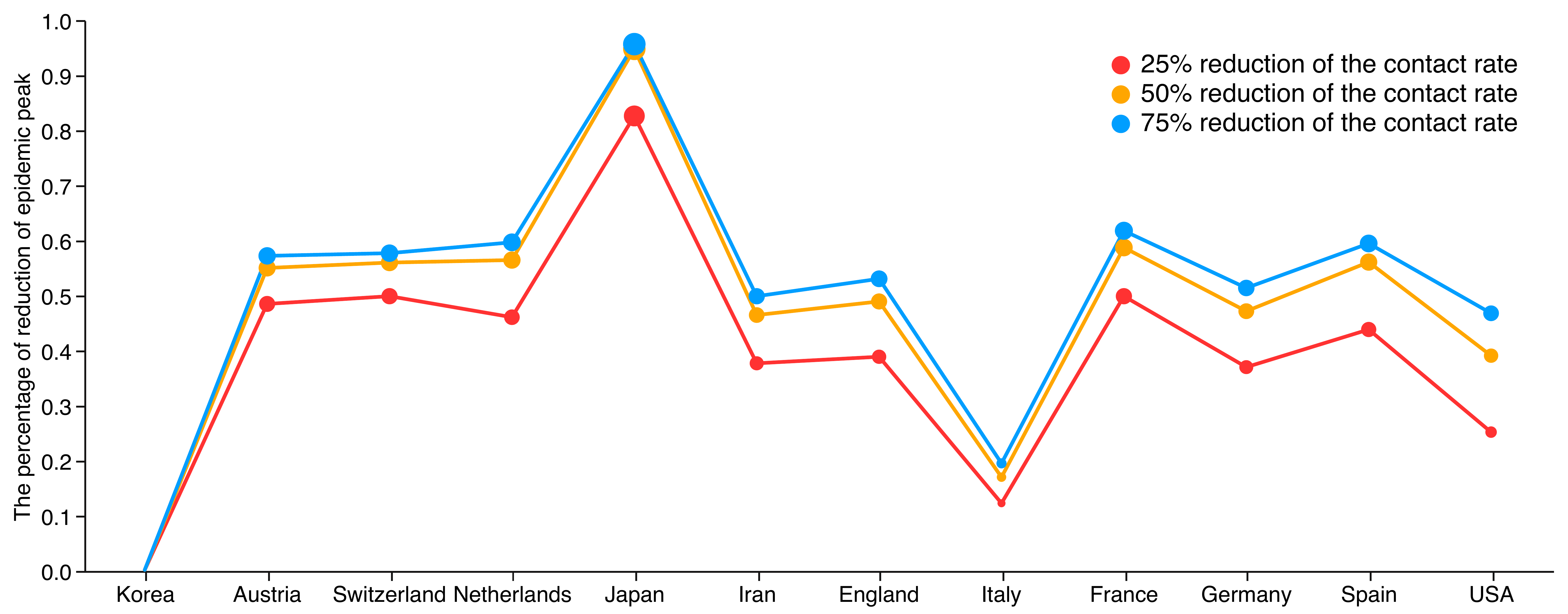}
		\label{com11}
	}%
	\centering
	\caption{The percentage of reduction of epidemic peaks of different countries under different control intensities.}
\end{figure}
\section{Discussion}
Nowcasting and forecasting of epidemics play a vital role in predicting geographic disease spread as well as case counts so as to better inform public health interventions when outbreaks occur \cite{realtimeforecast}. In this study, we developed a stochastic SEIR model under public health interventions, the nonlinear filtering and prediction  methods and selection criterion for the population size for pathogen transmission, which could be valuable to national and international agencies for public health situation perception and design of intervention strategies. Our technical contributions are made with a the backdrop of the COVID-19 pandemic that has engulfed the world. We infer that USA has the most serious situation as it has $1.90\times10^5$ (95\% CI: $1.87\times10^5$-$1.92\times10^5$) latent infections as of April 2, 2020. And in England, the latent infections account for a very large fraction as high as 58\% of the total number of infections, which would maintain outbreaks with a lag time about 1-2 weeks. And then, similar phenomenon will occur in Austria, Japan and Netherlands. The global latent infections of GEC is about $7.47\times10^5$ (95\% CI: $7.32\times10^5$-$7.62\times10^5$) as of April 2, 2020. Based on the present trajectory, the latent infections may cause further outbreaks in the case of weak public health interventions.

Therefore, it is essential to strengthen control intensity such as large-scale quarantine, strict controls on travel and extensive monitoring of suspected cases to reduce the contact rate and the population size of the susceptible. Here we regard the population size $N$ for pathogen transmission as one of the measures of the control intensity, which is conducive to assess the effectiveness of public health interventions. Moreover, it is valuable to obtain the more accurate number of the latent infections, which is helpful to design control strategies (e.g., scales of quarantine, strict controls on travel, contact isolation, hand hygiene, and use of face masks) in the future. However, on the population size $N$, we selected it by minimizing the prediction error based only on the current data, which may be prone to overfitting of the data. One possible improvement approach is to include the prediction covariance in the selection criterion. Besides, the stochastic model (\ref{stochastic SEIRC1})-(\ref{stochastic SEIRC6}) ignored the impact of the imported cases, which are considered as random noises. If the airline transportation data can be obtained, by combining the networked dynamic metapopulation model (e.g., \cite{sciencepop, metamodel, chinazzi2020effect}) and selection of the local population sizes in time, more accurate prediction methods and results for the local and global trend may be obtained. We believe that it is extremely important to predict further spread of COVID-19 globally and evaluate the effectiveness of different mitigation strategies and to try to mitigate the impact of this pandemic on the entire global population.
\section{Acknowledgement}
The authors would like to thank Xiaowei Li and Bi Yan for collecting some public data.
\section{Appendix}
\subsection{The Continuous-discrete UKF}
The main steps of the continuous-discrete UKF \cite{UKF}, \cite{cd-UKF} are summarized as follows. We define a matrix operation
\begin{align*}
\mY=g(\mX),
\end{align*} 
where $\mX\in \mathbb{R}^{n\times p}$ and $\mY\in\mathbb{R}^{n\times q}$, and the $i$-th column $\mY_i$ of the matrix $\mY$ is calculated from $g(\mX_i)$. 
\begin{itemize}
	\item \text{Model prediction step:} the state and covariance prediction $\hat{\vx}(t_{k|k-1})$, $\mP(t_{k|k-1})$  are implemented using 
	Runge-Kutta from initial conditions $\hat{\vx}(t_{k-1})$ and $\mP(t_{k-1})$:
	\begin{align}
	\label{pre1}\frac{\mathrm{d} \vx(t)}{\mathrm{d} t}=& f(\mX(t), t) w_{m}, \\
	\label{pre2}\frac{\mathrm{d} \mP(t)}{\mathrm{d} t}=& \mX(t) \mW f^{T}(\mX(t), t)+f(\mX(t), t) \mW \mX^{T}(t)+\mQ(t),
	\end{align}
	where $\vx(t)$ and $\mP(t)$ are the $n$-dimensional column vector and $n\times n$ dimensional positive semidefinite matrix, respectively. $f$ is the state transition function, the $(2n+1)\times(2n+1)$ dimensional sigma points matrix $\mX(t)$, the weight vector $w_m$ and the weight matrix $\mW$ are defined as follows:
	\begin{align}
	\label{sigma}\mX(t)=&\left(\vx(t)\quad \cdots\quad \vx(t)\right)
	\\
	&+\sqrt{n+\lambda}\left(\mathbf{0}_{n\times 1} \quad \sqrt{\mP(t)} \quad -\sqrt{\mP(t)}\right),
	\\
	w_{m}=&\left(W_{0}^{(m)} \quad \ldots  \quad W_{2n}^{(m)}\right)^{T}, \\
	\mW=& \left(\mI_{2n+1}- \left(w_{m}\quad \ldots \quad w_{m}\right)\right) \\	
	& \times \operatorname{diag}\left(W_{0}^{(c)}\quad \ldots\quad  W_{2 n}^{(c)}\right) \\
	& \times\left(\mI_{2n+1}-\left(w_{m} \quad\ldots\quad w_{m} \right)\right)^{T},\\
	W_{0}^{(m)} &=\frac{\lambda}{(n+\lambda)}, \\ W_{0}^{(c)} &=\frac{\lambda}{(n+\lambda)+\left(1-\theta^{2}+\kappa\right)}, \\ 
	W_{i}^{(m)} &=\frac{1}{2(n+\lambda)}, \quad i=1, \ldots, 2 n ,
	\\ 
	W_{i}^{(c)} &=\frac{1}{2(n+\lambda)}, \quad i=1, \ldots, 2 n ,
	\end{align}
	
	where $\sqrt{\mP(t)}$ is the matrix square root of $\mP(t)$. $\mathbf{0}_{n\times 1}, \mI_{2n+1}$ are the $n$-dimensional zero column vector, ($2n+1$)-dimensional identity matrix, respectively. $\lambda=\theta^2(n+\mu)-n$ is a parameter, and $\theta, \kappa ,\mu$  are positive constant parameters. 
		
	\item Measurement update step: Combining the new measurement $\vz_{k}$ at time $t_k$ with the state prediction $\hat{\vx}(t_{k|k-1})$ and covariance $\mP(t_{k|k-1})$ obtained in the forecast step, the state is updated via information gain matrix $\mK_{k}$,
	\begin{align}
	\label{up1}\hat{\mathbf{x}}(t_k)&=\hat{\vx}(t_{k|k-1})+\mK_{k}\left(\vz_{k}-\hat{\vz}_{k|k-1}\right), \\
	\label{up2}\mP(t_k)&=\mP(t_{k|k-1})-\mK_{k}	\operatorname{Cov}\left(\hat{\vz}_{k|k-1}\right)\mK_{k}^T,
	\end{align}
	where the gain matrix $\mK_{k}$, and the measurement covariance $\operatorname{Cov}\left(\hat{\vz}_{k|k-1}\right)$ are computed as follows:
	\begin{align}
	\label{upp1}\hat{\mX}(t_{k|k-1})=&\left(\hat{\vx}(t_{k|k-1}) \quad\cdots\quad \hat{\vx}(t_{k|k-1})\right)
	\\&+\sqrt{n+\lambda}\left(\mathbf{0}_{n\times 1} \quad \sqrt{\mP(t_{k|k-1})} \quad -\sqrt{\mP(t_{k|k-1})}\right),
	\\
	\label{measigma1}\hat{\mZ}_{k|k-1}&= \mH\hat{\mX}(t_{k|k-1}),
	\\
	\label{measigma2}\hat{\vz}_{k|k-1}&=\hat{\mZ}_{k|k-1}w_m,
	\\
	\label{meanscov}\operatorname{Cov}\left(\hat{\vz}_{k|k-1}\right)&=\hat{\mZ}_{k|k-1}\mW\hat{\mZ}_{k|k-1}^T+\mathbf{R}_{k},
	\\
	\label{cross}\operatorname{Cov}\left(\hat{\vx}(t_{k|k-1}), \hat{\vz}_{k|k-1}\right)&=\hat{\mX}(t_{k|k-1})\mW\hat{\mZ}_{k|k-1}^T,
	\\
	\label{up3}\mK_{k}&=\operatorname{Cov}\left(\hat{\vx}(t_{k|k-1}), \hat{\vz}_{k|k-1}\right)
	\operatorname{Cov}^{-1}\left(\hat{\vz}_{k|k-1}\right),
	\end{align}
\end{itemize}

More details can be seen in \cite{UKF}, \cite{cd-UKF}.

\subsection{Lemma \ref{lemma1} and its Proof }
\begin{lemma}
	\label{lemma1} Considering the deterministic SEIR model (i.e., neglecting the random noises and randomness of the deterministic parameters in (\ref{RSEIR1})-(\ref{RSEIR5})), if the states $E, I, R, D$ and model parameters have a perturbation $\mathbf{O}(\frac{1}{N})$ which is infinitesimal of the same order of $\frac{1}{N}$, and the population size $N=S+E+I+R+D$ is large enough,  then the forecast results in finite time interval derived by Runge-Kutta integration are with perturbation $\mathbf{O}(\frac{1}{N})$ as well.
\end{lemma}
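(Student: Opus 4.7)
The plan is to cast the statement as a continuous-dependence-on-data result for the deterministic SEIR ODE and then transfer it to the Runge--Kutta iterates by standard numerical stability. First I would reduce the system to the four components $\vy=(E,I,R,D)^T$ via the identity $S=N-E-I-R-D$, so the dynamics read $\dot{\vy}=\vg(\vy,\vp,N)$ with $\vp=(\alpha,\beta,\gamma^1,\gamma^2)^T$. Let $(\vy,\vp)$ and $(\tilde{\vy},\tilde{\vp})$ denote the unperturbed and perturbed trajectories with $\|\tilde{\vy}(0)-\vy(0)\|$ and $\|\tilde{\vp}-\vp\|$ both of order $1/N$, and set $\Delta\vy := \tilde{\vy}-\vy$, $\Delta\vp := \tilde{\vp}-\vp$.

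The heart of the argument will be a Lipschitz bound on $\vg$ whose constants do not depend on $N$. The only subtle term is the bilinear $\alpha S I/N = \alpha I\bigl(1-(E+I+R+D)/N\bigr)$. Because the initial counts $(E_0,I_0,R_0,D_0)$ are fixed independently of $N$ and growth is controlled by the bounded parameters, on any finite interval $[0,T]$ one has $\|\vy(t)\|\le C$ with $C$ independent of $N$; this is the only place where the ``$N$ large enough'' hypothesis is used, since it guarantees $(E+I+R+D)/N = O(1/N)$ throughout $[0,T]$. A first-order Taylor expansion in $(\vy,\vp)$ then yields
\begin{equation*}
\vg(\tilde{\vy},\tilde{\vp},N)-\vg(\vy,\vp,N) = \mA(t)\,\Delta\vy + \mB(t)\,\Delta\vp + \mathbf{O}(1/N),
\end{equation*}
where $\mA(t)$ and $\mB(t)$ have entries uniformly bounded by a constant $L$ depending only on $C$ and on bounds for $\vp$; in particular the potentially dangerous variation $\Delta\alpha\cdot SI/N$ is $O(1/N)\cdot O(1)=O(1/N)$ because $SI/N\approx I$ is bounded.

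Gr\"{o}nwall's inequality then gives $\|\Delta\vy(t)\|\le e^{Lt}\bigl(\|\Delta\vy(0)\|+tL\|\Delta\vp\|\bigr)+\mathbf{O}(1/N) = \mathbf{O}(1/N)$ on $[0,T]$. To pass from the exact flow to the Runge--Kutta iterates I would observe that each RK4 step is a fixed smooth combination of four evaluations of $\vg$, hence Lipschitz in its inputs with constant $1+hL+O(h^2)$; iterating over $n=T/h$ steps yields the discrete Gr\"{o}nwall estimate $(1+hL)^n\le e^{LT}$, and the per-step $\mathbf{O}(1/N)$ residual accumulates additively to $\mathbf{O}(T/N) = \mathbf{O}(1/N)$. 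The main obstacle---and the only genuinely non-routine step---is establishing the $N$-independent a priori bound on $\|\vy(t)\|$ that keeps the nonlinear term $SI/N$ tame; once that bound is in hand, the rest is the textbook Gr\"{o}nwall-plus-RK-stability argument.
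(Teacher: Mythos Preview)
Your proposal is correct and reaches the same conclusion, but the route differs from the paper's. The paper does not invoke Lipschitz continuity or Gr\"onwall at all; instead it writes out one RK4 step explicitly, substitutes the perturbed state $\tilde{\vx}(t_k)=\vx(t_k)+\mathbf{O}(1/N)$ into the drift $f_d$, and decomposes
\[
f_d(\tilde{\vx}(t_k),t_k)=f_1(\vx(t_k),t_k)+f_N(\vx(t_k),t_k),
\]
where $f_1$ is the $N\to\infty$ limit (e.g.\ first component $\alpha I-\beta E$) and $f_N$ collects every term carrying a factor $1/N$---both the bilinear piece $\alpha(E{+}I{+}R{+}D)I/N$ and the products of states/parameters with the $\mathbf{O}(1/N)$ perturbations. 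This yields $\tilde{k}_1=k_1+\mathbf{O}(1/N)$ by inspection, and the same is asserted for $\tilde{k}_2,\tilde{k}_3,\tilde{k}_4$, giving $\tilde{\vx}(t_m)=\vx(t_m)+\mathbf{O}(1/N)$ for one step.

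Your abstract Lipschitz-plus-Gr\"onwall argument is cleaner and more general: it applies to any smooth right-hand side once the $N$-independent a~priori bound on $\|\vy(t)\|$ is in hand, and your discrete Gr\"onwall step $(1+hL)^{T/h}\le e^{LT}$ handles the accumulation over many RK steps explicitly, which the paper leaves implicit. The paper's hands-on decomposition, on the other hand, makes the mechanism transparent---one sees exactly which terms are small and why---and that explicit structure is reused downstream when the same perturbation analysis is pushed through the UKF sigma points and covariance update.
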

\begin{proof}
	Since $N=S+E+I+R+D$, the deterministic SEIR model  can be simplified as follows:
	\begin{align}
	\label{deter1}\frac{\mathrm{d} E(t)}{\mathrm{d} t}&=\alpha \cdot \frac{(N-E(t)-I(t)-R(t)-D(t)) \cdot I(t)}{N}-\beta\cdot E(t),\\
	\frac{\mathrm{d} I(t)}{\mathrm{d} t}&=\beta\cdot E(t) -(\gamma^1+\gamma^2) \cdot I(t),\\
	\frac{\mathrm{d} R(t)}{\mathrm{d} t}&=\gamma^1 \cdot I(t),\\
	\label{deter5}\frac{\mathrm{d} D(t)}{\mathrm{d} t}&=\gamma^2 \cdot I(t),
	\end{align}
	where the parameters $\alpha$, $\beta$, $\gamma^1$, $\gamma^2$ are constants, and $S=N-E-I-R-D$ holds for all time. Denote the state with a perturbation as 
	\begin{align*}
	\tilde{\vx}(t_k)&=\vx(t_k)+\mathbf{O}(\frac{1}{N}),
	\end{align*}
	where $\vx(t_k)$ represents the true state $\left(E(t_k), I(t_k), R(t_k), D(t_k), \alpha, \beta, \gamma^1, \gamma^2\right)$ at time $t_k$. We use the convention $f_d(\vx(t),t)$ as the drift function in differential equations (\ref{deter1})-(\ref{deter5}). Herein, we adopt the fourth order Runge-Kutta method to derive the state  $\tilde{\vx}(t_m)$ at time $t_m$ by using initial condition $\tilde{\vx}(t_k)$.
	\begin{align}
	\label{RK4}\tilde{\vx}(t_m)=&\tilde{\vx}(t_k)+\frac{h}{6}(\tilde{k}_1+2\tilde{k}_2+2\tilde{k}_3+\tilde{k}_4),
	\\
	\tilde{k}_1&=f_d(\tilde{\vx}(t_k),t_k),
	\\
	\label{k2}\tilde{k}_2&=f_d(\tilde{\vx}(t_k)+\frac{h}{2}\tilde{k}_1,t_k+\frac{h}{2}),
	\\
	\tilde{k}_3&=f_d(\tilde{\vx}(t_k)+\frac{h}{2}\tilde{k}_2,t_k+\frac{h}{2}),
	\\
	\label{k4}\tilde{k}_4&=f_d(\tilde{\vx}(t_k)+h\tilde{k}_3,t_k+h),
	\end{align}
	where $h$ is the time interval $t_m-t_k$. By ignoring the higher order terms than $\mathbf{O}(\frac{1}{N})$ generated in the nonlinear transformation of $f_d$, we have
	\begin{align}
	\begin{split}
	\label{divide}\tilde{k}_1&=f_d(\tilde{\vx}(t_k),t_k)\\
	&=f_1(\vx(t_k),t_k)+f_N(\vx(t_k),t_k),
	\end{split}
	\end{align}
	where $f_d(\tilde{\vx}(t_k),t_k)$, $f_N(\vx(t_k),t_k)$ and $f_1(\vx(t_k),t_k)$ are
	\begin{align*}
	f_d(\tilde{\vx}(t_k),t_k)&=\left(      
	\begin{array}{c}
	-(\beta+\mathbf{O}(\frac{1}{N})(E(t_k)+\mathbf{O}(\frac{1}{N})+\frac{(\alpha+\mathbf{O}(\frac{1}{N}))(I(t_k)+\mathbf{O}(\frac{1}{N}))}{N}\cdot
	\\
	(N-E(t_k)-I(t_k)-R(t_k)-D(t_k)-\mathbf{O}(\frac{1}{N}))
	\\
	(\beta+\mathbf{O}(\frac{1}{N})(E(t_k)+\mathbf{O}(\frac{1}{N})-(\gamma^1+\gamma^2+\mathbf{O}(\frac{1}{N}))(I(t_k)+\mathbf{O}(\frac{1}{N})
	\\
	(\gamma^1+\mathbf{O}(\frac{1}{N})(I(t_k)+\mathbf{O}(\frac{1}{N}) 
	\\
	(\gamma^2+\mathbf{O}(\frac{1}{N})(I(t_k)+\mathbf{O}(\frac{1}{N}) 
	\end{array}\right),
	\\ 
	f_N(\vx(t_k),t_k)&=\left(      
	\begin{array}{c}  
	\alpha \mathbf{O}(\frac{1}{N})+ I(t_k)\mathbf{O}(\frac{1}{N})-\beta\mathbf{O}(\frac{1}{N})-E(t_k)\mathbf{O}(\frac{1}{N})-
	\\
	\alpha(E(t)+I(t)+R(t)+D(t)I(t)\cdot \mathbf{O}(\frac{1}{N})
	\\
	\beta\mathbf{O}(\frac{1}{N})+E(t_k)\mathbf{O}(\frac{1}{N})-(\gamma^1+\gamma^2)\mathbf{O}(\frac{1}{N})-I(t_k)\mathbf{O}(\frac{1}{N})
	\\
	\gamma^1\mathbf{O}(\frac{1}{N})+I(t_k)\mathbf{O}(\frac{1}{N})
	\\
	\gamma^2\mathbf{O}(\frac{1}{N})+I(t_k)\mathbf{O}(\frac{1}{N})
	\end{array}\right),
	\\
	f_1(\vx(t_k),t_k)&=\left(      
	\begin{array}{c}  
	\alpha I(t_k)-\beta E(t_k)
	\\
	\beta E(t_k)-(\gamma^1+\gamma^2) I(t_k)
	\\
	\gamma^1 I(t_k)
	\\
	\gamma^2 I(t_k)
	\end{array}\right).
	\end{align*}
	Similarly, the $k_1$ calculated by the initial value $\vx(t_k)$ without the permutation $\mathbf{O}(\frac{1}{N})$ is 
	\begin{align*}
	\begin{split}
	{k}_1&=f_d(\vx(t_k),t_k)\\
	&=\overline{f_N}(\vx(t_k),t_k)+f_1(\vx(t_k),t_k),
	\end{split}
	\end{align*}
	where $\overline{f_N}(\vx(t_k),t_k)$ is
	\begin{align*}
	\overline{f_N}(\vx(t_k),t_k)=\left(      
	\begin{array}{c}  
	-\alpha(E(t)+I(t)+R(t)+D(t)I(t)\cdot \mathbf{O}(\frac{1}{N})
	\\
	0
	\\
	0
	\\
	0
	\end{array}\right),
	\end{align*}
	Comparing $\tilde{k}_1$ and $k_1$, we conclude that $\tilde{k}_1=k_1+\mathbf{O}(\frac{1}{N})$. Similarly from the calculation of (\ref{k2})-(\ref{k4}), it can be concluded that $\tilde{k}_i=k_i+\mathbf{O}(\frac{1}{N}), i=2, 3, 4$ as well. By (\ref{RK4}), we have
	\begin{align*}
	\tilde{\vx}(t_m)=\vx(t_m)+\mathbf{O}(\frac{1}{N}).
	\end{align*}
\end{proof}

\subsection{Proposition \ref{pro0} and its Proof}

\begin{proposition}
	\label{pro0} If the current state estimation and covariance at time $t_k$  have a perturbation  $\mathbf{O}(\frac{1}{N})$ which has a compatible dimension and the population size $N=S+E+I+R+D$ is large enough, then the state prediction $\hat{\vx}(t)$ and covariance $\mP(t)$ in finite time derived by the model prediction step of the continuous-discrete UKF via Runge-Kutta integration are with perturbations $\mathbf{O}(\frac{1}{N})$ as well.
\end{proposition}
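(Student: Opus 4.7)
The plan is to reduce Proposition~\ref{pro0} to Lemma~\ref{lemma1} by exploiting the sigma-point representation used inside the continuous-discrete UKF. The model prediction step integrates the coupled ODEs (\ref{pre1})--(\ref{pre2}) for $(\vx(t), \mP(t))$ via fourth-order Runge--Kutta, but at every stage the right-hand side is computed through the sigma-point matrix $\mX(t)$ in (\ref{sigma}) and the drift $f$. Thus the step factors into three ingredients that I will analyze in turn: (i) construction of sigma points from $(\vx,\mP)$, (ii) propagation of each column through $f$, and (iii) formation of weighted sums to obtain the new mean and covariance. I will track how an initial $\mathbf{O}(1/N)$ perturbation of $(\hat{\vx}(t_{k-1}), \mP(t_{k-1}))$ passes through each ingredient and conclude inductively on the Runge--Kutta stages.

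\textbf{Step 1: Sigma-point perturbation.} A perturbation $\mP \to \mP + \mathbf{O}(1/N)$ induces $\sqrt{\mP} \to \sqrt{\mP} + \mathbf{O}(1/N)$ by the standard first-order matrix square-root perturbation bound, provided $\mP$ stays positive definite with eigenvalues bounded away from zero on the time interval of interest. Substituting into (\ref{sigma}) together with the perturbation of $\vx$ then shows that every column of $\mX(t)$ is perturbed entrywise by $\mathbf{O}(1/N)$; in particular the parameter slots of each column retain the same order assumed in Lemma~\ref{lemma1}.

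\textbf{Step 2: Propagation and reconstruction.} I apply Lemma~\ref{lemma1} columnwise to $\mX(t)$: each sigma-point column is a state--parameter vector that, when advanced by one Runge--Kutta step of the drift $f$, acquires only an additional $\mathbf{O}(1/N)$ error. Hence the propagated sigma-point matrix still differs from the unperturbed one by $\mathbf{O}(1/N)$ entrywise. The predicted mean is $\mX(t)\,w_m$ and the covariance drift is the bilinear form $\mX\mW f^T(\mX)+f(\mX)\mW\mX^T+\mQ$, both of which are smooth functions of $\mX$ with bounded (fixed) weight matrices; expanding around the unperturbed $\mX$ and discarding terms of order higher than $\mathbf{O}(1/N)$ (as in the proof of Lemma~\ref{lemma1}) yields a perturbation of order $\mathbf{O}(1/N)$ in both $d\vx/dt$ and $d\mP/dt$. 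Iterating this one-stage bound across the four Runge--Kutta stages and across the finitely many time steps needed to reach $t$ preserves the order and gives the claim for $\hat{\vx}(t)$ and $\mP(t)$.

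\textbf{Main obstacle.} The chief difficulty is controlling the covariance equation together with the square-root map simultaneously, since both are nonlinear in $\mP$: the square root is only $C^1$ where $\mP$ has eigenvalues bounded below, and the bilinear drift in $\mP$ could in principle amplify perturbations through the magnitude of $\mX$. Verifying that $(\vx(t), \mP(t))$ remains in a compact region on which the augmented right-hand side is Lipschitz with an $N$-independent constant is the main technical step, as this is what legitimates dropping higher-order terms and invoking the Runge--Kutta single-step error estimate from Lemma~\ref{lemma1}. Once this uniform regularity is established, the inductive argument sketched above runs through mechanically.
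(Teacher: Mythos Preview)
Your proposal is correct and follows essentially the same route as the paper: both arguments (i) use a first-order perturbation bound for the matrix square root (the paper cites the Cholesky perturbation result of \cite{matrixdecom}) to transfer the $\mathbf{O}(1/N)$ perturbation from $(\hat{\vx},\mP)$ to the sigma-point matrix $\mX$, and (ii) then invoke Lemma~\ref{lemma1} to push the perturbation through the Runge--Kutta integration of the drift. Your write-up is in fact more explicit than the paper's on two points the paper leaves terse---the bilinear covariance drift $\mX\mW f^T(\mX)+f(\mX)\mW\mX^T+\mQ$ and the need for $\mP$ to stay in a region where the square root is uniformly Lipschitz---so there is no gap to flag.
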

\begin{proof}
	Considering the model prediction step, the Runge-Kutta integration is implemented for the differential equations (\ref{pre1})-(\ref{pre2}) to derive the state prediction and covariance. Without loss of generality, we assume that the state and covariance at time $t_k$ are with a perturbation $\mathbf{O}(\frac{1}{N})$,
	\begin{align*}
	\tilde{\vx}(t_k)&=\hat{\vx}(t_k)+\mathbf{O}(\frac{1}{N}),
	\\
	\tilde{\mP}(t_k)&=\mP(t_k)+\mathbf{O}(\frac{1}{N}),
	\end{align*}
	where $\hat{\vx}(t_k)$, $\mP(t_k)$ are the state estimation and covariance at time $t_k$. By the continuous-discrete UKF, the set of sigma points is chosen via (\ref{sigma}), 
	\begin{align}
	\hat{\mX}(t_k)&=\left(\hat{\vx}(t_k)\quad \cdots\quad \hat{\vx}(t_k)\right)+\sqrt{n+\lambda}\left(\mathbf{0} \quad \sqrt{\mP(t_k)} \quad -\sqrt{\mP(t_k)}\right),\\
	\label{proof1}\tilde{\mX}(t_{k})&=\left(\tilde{\vx}(t_k)\quad \cdots\quad \tilde{\vx}(t_k)\right)+\sqrt{n+\lambda}\left(\mathbf{0} \quad \sqrt{\tilde{\mP}(t_k)} \quad -\sqrt{\tilde{\mP}(t_k)}\right),
	\end{align}
	where $\hat{\mX}(t_{k})$ and $\tilde{\mX}(t_{k})$ are derived by the state and covariance without perturbations and with perturbations, respectively.

	As is shown in \cite{matrixdecom}, in light of  the first order perturbation bound of Cholesky decomposition, we have:
	\begin{align*}
	\frac{\|\sqrt{\tilde{\mP}(t_k)}-\sqrt{\mP(t_k)}\|_{F}}{\|\sqrt{\mP(t_k)}\|} \leq \frac{1}{\sqrt{2}} \operatorname{cond}(\mP(t_k)) \frac{\|\tilde{\mP}(t_k)-\mP(t_k)\|_{F}}{\|\mP(t_k)\|},
	\end{align*}
	where $\|\cdot\|_{F}$ is the Frobenius norm, and $\operatorname{cond}(\cdot)$ represents the condition number. Since $\mP(t_k)$ is a constant matrix and $ \tilde{\mP}(t_k)-\mP(t_k)=\mathbf{O}(\frac{1}{N})$, we have
	\begin{align*}
	\sqrt{\tilde{\mP}(t_k)}=\sqrt{\mP(t_k)}+\mathbf{O}(\frac{1}{N}).
	\end{align*}
	Rewriting (\ref{proof1}), we obtain that 
	\begin{align}
	\label{sigmapoint}\tilde{\mX}(t_{k})&=\hat{\mX}(t_k)+\mathbf{O}(\frac{1}{N}),
	\end{align}
	namely, the sigma points of $\tilde{\vx}(t_k)$ are also with a perturbation $\mathbf{O}(\frac{1}{N})$ compared with the sigma points of $\hat{\vx}(t_k)$.
	
	Thus, we adopt the fourth-order Runge-Kutta method to calculate the state and covariance prediction (\ref{pre1})-(\ref{pre2}). By Lemma \ref{lemma1}, we have that the state and covariance prediction $\tilde{\vx}(t_{k+1|k})$, $\tilde{\mP}(t_{k+1|k})$ based on $\tilde{\vx}(t_{k}),\tilde{\mP}(t_{k})$ are still with a perturbation $\mathbf{O}(\frac{1}{N})$ comparing with the state and covariance $\hat{\vx}(t_{k+1|k})$, $\mP(t_{k+1|k})$ predicted by $\hat{\vx}(t_{k})$ and $\mP(t_{k})$.	
	
\end{proof}
\subsection{Proposition \ref{pro} and its Proof}
\begin{proposition}
	\label{pro}If the state estimation and covariance at time $t_k$ are with perturbations $\mathbf{O}(\frac{1}{N})$ which  has a compatible dimension and the population size $N=S+E+I+R+D$ is large enough, then the state update of $\hat{\vx}(t)$ and $\mP(t)$ in finite time through continuous-discrete UKF via Runge-Kutta integration are still with perturbations $\mathbf{O}(\frac{1}{N})$.
\end{proposition}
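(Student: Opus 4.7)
The plan is to propagate the $\mathbf{O}(\frac{1}{N})$ perturbation through every quantity constructed in the measurement update step (\ref{up1})--(\ref{up3}), using Proposition \ref{pro0} as the starting point. First I would invoke Proposition \ref{pro0} to obtain that the state and covariance predictions $\hat{\vx}(t_{k|k-1})$ and $\mP(t_{k|k-1})$ are perturbed only by $\mathbf{O}(\frac{1}{N})$. Then I would rebuild the post-prediction sigma point matrix $\hat{\mX}(t_{k|k-1})$ in (\ref{upp1}): the column of means is clearly $\mathbf{O}(\frac{1}{N})$-perturbed, and the Cholesky-factor column $\sqrt{\mP(t_{k|k-1})}$ inherits the same order of perturbation via the first-order perturbation bound of Cholesky decomposition invoked in the proof of Proposition \ref{pro0}. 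Hence $\tilde{\mX}(t_{k|k-1})=\hat{\mX}(t_{k|k-1})+\mathbf{O}(\frac{1}{N})$.

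Next, since the measurement map $\mH$ in (\ref{meas}) is a fixed linear operator, the measurement sigma points (\ref{measigma1}), the predicted measurement (\ref{measigma2}) and the innovation covariance (\ref{meanscov}) are all linear/bilinear in $\hat{\mX}(t_{k|k-1})$ with fixed weights $w_m, \mW$ and a fixed noise covariance $\mathbf{R}_k$. Expanding each expression and discarding terms of higher order than $\mathbf{O}(\frac{1}{N})$, as was done for $f_d$ in the proof of Lemma \ref{lemma1}, yields
\begin{align*}
\tilde{\hat{\vz}}_{k|k-1}&=\hat{\vz}_{k|k-1}+\mathbf{O}(\tfrac{1}{N}),\\
\operatorname{Cov}(\tilde{\hat{\vz}}_{k|k-1})&=\operatorname{Cov}(\hat{\vz}_{k|k-1})+\mathbf{O}(\tfrac{1}{N}),\\
\operatorname{Cov}(\tilde{\hat{\vx}}(t_{k|k-1}),\tilde{\hat{\vz}}_{k|k-1})&=\operatorname{Cov}(\hat{\vx}(t_{k|k-1}),\hat{\vz}_{k|k-1})+\mathbf{O}(\tfrac{1}{N}).
\end{align*}

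The key technical step is the gain $\mK_k$ in (\ref{up3}), which requires inverting the innovation covariance. Because $\mathbf{R}_k$ is a fixed positive definite matrix, $\operatorname{Cov}(\hat{\vz}_{k|k-1})$ is uniformly positive definite and in particular well conditioned, so the standard first-order perturbation identity
\begin{align*}
(\mA+\mathbf{O}(\tfrac{1}{N}))^{-1}=\mA^{-1}-\mA^{-1}\mathbf{O}(\tfrac{1}{N})\mA^{-1}+o(\tfrac{1}{N})=\mA^{-1}+\mathbf{O}(\tfrac{1}{N})
\end{align*}
applies to $\mA=\operatorname{Cov}(\hat{\vz}_{k|k-1})$, giving $\operatorname{Cov}^{-1}(\tilde{\hat{\vz}}_{k|k-1})=\operatorname{Cov}^{-1}(\hat{\vz}_{k|k-1})+\mathbf{O}(\frac{1}{N})$. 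Multiplying a perturbed cross-covariance by a perturbed inverse and again keeping only first-order terms yields $\tilde{\mK}_k=\mK_k+\mathbf{O}(\frac{1}{N})$. This uniform conditioning is the main obstacle I expect to address carefully: one must argue that the selected $N$ range keeps $\operatorname{Cov}(\hat{\vz}_{k|k-1})$ bounded away from singularity on the finite time horizon considered, which follows from $\mathbf{R}_k\succ 0$ being fixed and the sigma-point construction being continuous in $\mP(t_{k|k-1})$.

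Finally, I would substitute the perturbed quantities into the update equations (\ref{up1})--(\ref{up2}). Since $\vz_k$ is fixed measured data, the innovation $\vz_k-\tilde{\hat{\vz}}_{k|k-1}$ differs from $\vz_k-\hat{\vz}_{k|k-1}$ by $\mathbf{O}(\frac{1}{N})$, and the product $\tilde{\mK}_k(\vz_k-\tilde{\hat{\vz}}_{k|k-1})$ equals $\mK_k(\vz_k-\hat{\vz}_{k|k-1})$ up to $\mathbf{O}(\frac{1}{N})$ after dropping higher-order cross terms. The same argument applied to the quadratic form in (\ref{up2}) shows that the updated covariance is perturbed by $\mathbf{O}(\frac{1}{N})$. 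Combining the two conclusions,
\begin{align*}
\tilde{\hat{\vx}}(t_k)=\hat{\vx}(t_k)+\mathbf{O}(\tfrac{1}{N}),\qquad \tilde{\mP}(t_k)=\mP(t_k)+\mathbf{O}(\tfrac{1}{N}),
\end{align*}
which is exactly the claim. Iterating the prediction step (Proposition \ref{pro0}) and this update step then propagates the $\mathbf{O}(\frac{1}{N})$ perturbation over any finite time horizon, and explains why increasing $N$ beyond a certain threshold leaves the filter output essentially unchanged, confirming the non-identifiability phenomenon observed in Fig. \ref{fig_beijing_trend1}.
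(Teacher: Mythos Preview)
Your proposal is correct and follows essentially the same route as the paper: invoke Proposition~\ref{pro0} for the prediction, propagate the $\mathbf{O}(\tfrac{1}{N})$ perturbation through the sigma points (via the Cholesky perturbation bound), then through $\hat{\mZ}_{k|k-1}$, $\hat{\vz}_{k|k-1}$, the innovation and cross covariances, the inverse of the innovation covariance, the gain, and finally the update equations. The only cosmetic difference is that the paper controls $\operatorname{Cov}^{-1}(\tilde{\vz}_{k|k-1})$ via a cited pseudo-inverse perturbation bound rather than your Neumann-series identity, and it does not spell out the well-conditioning argument from $\mathbf{R}_k\succ 0$ that you (rightly) make explicit.
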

\begin{proof}

	Denote the state estimation and covariance with  perturbations as follows:
	\begin{align*}
	\tilde{\vx}(t_k)=&\hat{\vx}(t_k)+\mathbf{O}(\frac{1}{N}),
	\\
	\tilde{\mP}(t_k)=&\mP(t_k)+\mathbf{O}(\frac{1}{N}),
	\end{align*}
	By Proposition \ref{pro0}, we know that the state and covariance prediction of $\tilde{\vx}(t_k), \tilde{\mP}(t_k)$ are still with a perturbation, i.e.,
	\begin{align*}
	\tilde{\vx}(t_{k+1|k})=& 
	\hat{\vx}(t_{k+1|k})+\mathbf{O}(\frac{1}{N}),
	\\
	\tilde{\mP}(t_{k+1|k})=&
	\mP(t_{k+1|k})+\mathbf{O}(\frac{1}{N}),
	\end{align*}
	then the state prediction and covariance $\hat{\vx}(t_{k+1|k}), \mP(t_{k+1|k})$ without perturbation are updated by measurement $\vz_{k+1}$ through (\ref{up1})-(\ref{up2}), i.e.,
	\begin{align*}
	\hat{\mathbf{x}}(t_{k+1})&=\hat{\vx}(t_{k+1|k})+\mK_{k+1}\left(\vz_{k+1}-\hat{\vz}_{k+1|k}\right), \\
	\mP(t_{k+1})&=\mP(t_{k+1|k})-\mK_{k+1}	\operatorname{Cov}\left(\hat{\vz}_{k+1|k}\right)\mK_{k+1}^T,
	\end{align*}
	where $\hat{\vz}_{k+1|k}, \operatorname{Cov}\left(\hat{\vz}_{k+1|k}\right), \mK_{k+1}$ are calculated by (\ref{upp1})-(\ref{up3}).
	
	Besides, for the state prediction and covariance $\tilde{\vx}(t_{k+1|k}), \tilde{\mP}(t_{k+1|k})$ with a perturbation, similarly as (\ref{sigmapoint}), the sigma points of $\tilde{\vx}(t_{k+1|k})$ are with perturbations, i.e.,
	\begin{align*}
	\tilde{\mX}(t_{k+1|k})=\hat{\mX}(t_{k+1|k})+\mathbf{O}(\frac{1}{N}),
	\end{align*}
	by the observation model (\ref{cd2}) and (\ref{measigma1})-(\ref{measigma2}), we have
	\begin{align*}
	\tilde{\mZ}_{k+1|k}&=\mH\tilde{\mX}(t_{k+1|k})\\
	&=\hat{\mZ}_{k+1|k}+\mathbf{O}(\frac{1}{N})\\
	\tilde{\vz}_{k+1|k}&=\mH\hat{\mX}(t_{k+1|k})w_{m}+\mathbf{O}(\frac{1}{N}),\\
	&=\hat{\vz}_{k+1|k}+\mathbf{O}(\frac{1}{N})
	\end{align*}
	namely, the observations of $\tilde{\mX}(t_{k+1|k})$ are still with a perturbation $\mathbf{O}(\frac{1}{N})$ comparing to that of $\hat{\mX}(t_{k+1|k})$. The innovation covariance and cross covariance are calculated by (\ref{meanscov}) and (\ref{cross}),
	\begin{align*}
	\operatorname{Cov}\left(\tilde{\vz}_{k+1|k}\right)&=\tilde{\mZ}_{k+1|k}\mW\tilde{\mZ}_{k+1|k}^T+\mR_{k+1}
	\\
	&=\operatorname{Cov}\left(\hat{\vz}_{k+1|k}\right)+\mathbf{O}(\frac{1}{N}),\\
	\operatorname{Cov}\left(\tilde{\vx}(t_{k+1|k}),\tilde{\vz}_{k+1|k}\right)&=\tilde{\mX}(t_{k+1|k})\mW\tilde{\mZ}_{k+1|k}^T
	\\
	&=\operatorname{Cov}\left(\hat{\vx}(t_{k+1|k}),\hat{\vz}_{k+1|k}\right)+\mathbf{O}(\frac{1}{N}).
	\end{align*}
	According to \cite{matrixinv}, the inverse of matrix with perturbations has following property:
	\begin{align*}
	\left\|B^{-1}-A^{-1}\right\|_F \leqslant \mu \left\|A^{-1}\right\|_2\left\|B^{-1}\right\|_{2} \left\|B-A\right\|_{F},
	\end{align*}
	where $\mu$ is a constant, $\left\|\cdot\right\|_2$ and $\left\|\cdot\right\|_F$ represent the Euclidean norm and Frobenius norm, respectively. If we denote that  $B=\operatorname{Cov}\left(\tilde{\vz}_{k+1|k}\right)$, $A=\operatorname{Cov}\left(\hat{\vz}_{k+1|k}\right)$, then we have 
	\begin{align*}
	\left\|\operatorname{Cov}\left(\tilde{\vz}_{k+1|k}\right)^{-1}-\operatorname{Cov}\left(\hat{\vz}_{k+1|k}\right)^{-1}\right\|_F\leqslant
	\mu\cdot\left\|\operatorname{Cov}\left(\hat{\vz}_{k+1|k}\right)^{-1}\right\|&_2\cdot\left\|\operatorname{Cov}\left(\tilde{\vz}_{k+1|k}\right)^{-1}\right\|_2\cdot\left\|\mathbf{O}(\frac{1}{N})\right\|_{F},
	\end{align*}
	Thus,
	\begin{align*}
	\operatorname{Cov}^{-1}\left(\tilde{\vz}_{k+1|k}\right)=\operatorname{Cov}\left(\hat{\vz}_{k+1|k}\right)^{-1}+\mathbf{O}(\frac{1}{N}),
	\end{align*}
	and the gain matrix $\tilde{\mathbf{K}}_{k+1}$ is
	\begin{align*}
	\tilde{\mathbf{K}}_{k+1}&=\operatorname{Cov}\left(\tilde{\vx}(t_{k+1|k}),\tilde{\vz}_{k+1|k}\right)\operatorname{Cov}^{-1}\left(\tilde{\vz}_{k+1|k}\right)
	\\
	&=\mathbf{K}_{k+1}+\mathbf{O}(\frac{1}{N}).
	\end{align*}
	Moreover, the state prediction and covariance are updated by measurement $\vz_{k+1}$ as follows:
	\begin{align*}
	\tilde{\mathbf{x}}(t_{k+1})&=\tilde{\vx}(t_{k+1|k})+\tilde{\mathbf{K}}_{k+1}\left(\vz_{k+1}-\tilde{\vz}_{k+1|k}\right) 
	\\
	&=\hat{\vx}(t_{k+1})+\mathbf{O}(\frac{1}{N}),
	\\
	\tilde{\mP}(t_{k+1})&=\tilde{\mP}(t_{k+1|k})-\tilde{\mathbf{K}}_{k+1}	\operatorname{Cov}^{-1}\left(\tilde{\vz}_{k+1|k}\right)\tilde{\mathbf{K}}_{k+1}^T
	\\
	&=\hat{\mP}(t_{k+1})+\mathbf{O}(\frac{1}{N}).
	\end{align*}
	
	Therefore, the state estimation and covariance at time $t_{k+1}$ are with perturbations $\mathbf{O}(\frac{1}{N})$ as well.
\end{proof}

\end{document}